\newcommand{\licol}{\textsc{Li $3$-col}}
\newcommand{\likcol}{\textsc{Li $k$-col}}
\newcommand{\calL}{\mathcal{L}}
\algnewcommand{\LeftComment}[1]{\State \(\triangleright\) #1}
\title{List 3-Coloring on Comb-Convex and Caterpillar-Convex Bipartite Graphs} 
\author{Banu Baklan \c{S}en}{Computer Engineering Department, Kadir Has University, Istanbul, Turkey}{banubak@hotmail.com}{https://orcid.org/0000-0003-4545-5044}{}
\author{\"{O}znur Ya\c{s}ar Diner}{Computer Engineering Department, Kadir Has University, Istanbul, Turkey}{oznur.yasar@khas.edu.tr}{https://orcid.org/0000-0002-9271-2691}{}
\author{Thomas Erlebach}{Department of Computer Science, Durham University, Durham, United Kingdom} {thomas.erlebach@durham.ac.uk}{https://orcid.org/0000-0002-4470-5868}{}
\authorrunning{B. Baklan \c{S}en, \"{O}.Y. Diner, T. Erlebach} 
\keywords{Caterpillar-convex bipartite graphs, comb-convex bipartite graphs, computational complexity, list coloring} 
\begin{document}

\maketitle

\begin{abstract}
Given a graph $G=(V, E)$ and a list of available colors $L(v)$ for each vertex $v\in V$, where $L(v) \subseteq \{1, 2, \ldots, k\}$, {\sc List $k$-Coloring} refers to the problem of assigning colors to the vertices of $G$ so that each vertex receives a color from its own list and no two neighboring vertices receive the same color. The decision version of the problem {\sc List $3$-Coloring} is NP-complete even for bipartite graphs, and its complexity on comb-convex bipartite graphs has been an open problem. We give a polynomial-time algorithm to solve {\sc List $3$-Coloring} for caterpillar-convex bipartite graphs, a superclass of comb-convex bipartite graphs. We also give a polynomial-time recognition algorithm for the class of caterpillar-convex bipartite graphs.
\end{abstract}

\section{Introduction}
\label{sec:introduction}

Graph coloring is the problem of assigning colors to the vertices of a given graph
in such a way that no two adjacent vertices have the same color.
\emph{List coloring} \cite{vizing,erdos} is a generalization of
graph coloring in which each vertex must receive a color from its own list of
allowed colors.
In this paper, we study the list coloring problem with a fixed number of colors in subclasses of bipartite graphs. We give a polynomial-time algorithm for the list $3$-coloring problem for caterpillar-convex bipartite graphs, a superclass of comb-convex bipartite graphs. We also give a polynomial-time recognition algorithm for the class of caterpillar-convex bipartite graphs. Our results resolve the open question regarding the complexity of
list $3$-coloring for comb-convex bipartite graphs stated in~\cite{flavia,brettell_arxiv}.

We consider finite simple undirected graphs $G=(V, E)$ with vertex set $V$ and edge set $E$.
By $N_G(v)$ (or by $N(v)$ if the graph is clear from the context) we denote the neighborhood of $v$ in $G$, i.e., the set of vertices that are adjacent to~$v$. A \emph{$k$-coloring} of $G$ is a labeling that assigns colors to the vertices of $G$ from the set $[k]=\{1, 2,  \ldots, k\}$. A coloring is \emph{proper} if no two adjacent vertices have the same color.
A \emph{list assignment} of a graph $G=(V, E)$ is a mapping $\mathcal{L}$ that assigns each vertex $v \in V$ a list $\mathcal{L}(v) \subseteq \{1, 2,\ldots \} $ of admissible colors for~$v$. When $\mathcal{L}(v) \subseteq [k]=\{1, 2,\ldots k\}$ for every $v \in V$ we say that $\mathcal{L}$ is a $k$-list assignment of $G$. The total number of available colors is bounded by $k$ in a $k$-list assignment. On the other hand, when the only restriction is that $|\mathcal{L}(v)| \leq k$ for every $v \in V$, then we say that $\mathcal{L}$ is a list $k$-assignment of $G$.
\emph{List coloring} is the problem of deciding, for a given graph $G=(V,E)$
and list assignment $\mathcal{L}$, whether $G$ has a proper coloring
where each vertex~$v$ receives a color from its list $\mathcal{L}(v)$.
If $\calL$ is a $k$-list assignment for a fixed value of~$k$,
the problem becomes the list $k$-coloring problem:

\smallskip
\noindent
\underline{{\sc List $k$-Coloring} ({\sc Li $k$-Col})}\\
{\it Instance:} \hspace*{0.2mm} A graph $G=(V,E)$ and a $k$-list assignment $\calL$.\\
{\it Question:} Does $G$ have a proper coloring where each vertex $v$ receives a color from its list $\calL(v)$? 
\smallskip

If $\mathcal{L}$ is a list $k$-assignment instead of a $k$-list assignment, the
problem is called {\sc $k$-List Coloring}.

\begin{figure}[t] 
\begin{center}
\resizebox{0.9\textwidth}{!}{%
\begin{tikzpicture}[dot/.style={draw,circle,minimum size=1mm,inner sep=0pt,outer sep=0pt}]

\draw (-2,8.25) node {$x_1$};
\draw (-2,7.25) node {$x_2$};
\draw (-2,6.25) node {$x_3$};
\draw (-2,5.25) node {$x_4$};
\draw (-2,4.25) node {$x_5$};
\draw (-2,3.25) node {$x_6$};

\draw (1,7.25) node {$y_1$};
\draw (1,6.25) node {$y_2$};
\draw (1,5.25) node {$y_3$};
\draw (1,4.25) node {$y_4$};

\coordinate [dot,fill = black] (X6) at (-2,3);
\coordinate [dot,fill = black] (X5) at (-2,4);
\coordinate [dot,fill = black] (X4) at (-2,5);
\coordinate  [dot,fill = black](X3) at (-2,6);
\coordinate  [dot,fill = black](X2) at (-2,7);
\coordinate  [dot,fill = black](X1) at (-2,8);

\coordinate[dot,fill = black] (Y4) at (1,4);
\coordinate [dot,fill = black] (Y3) at (1,5);
\coordinate [dot,fill = black] (Y2) at (1,6);
\coordinate [dot,fill = black] (Y1) at (1,7);

\coordinate [dot,fill = black] (X7) at (2,5);
\coordinate [dot,fill = black] (X8) at (2,6);
\coordinate [dot,fill = black] (X9) at (2,7);

\coordinate [dot,fill = black] (Y5) at (3,5);
\coordinate [dot,fill = black] (Y6) at (3,6);
\coordinate [dot,fill = black] (Y7) at (3,7);

\draw (1.75,7) node {$x_1$};
\draw (1.75,6) node {$x_3$};
\draw (1.75,5) node {$x_5$};

\draw (3.25,7) node {$x_2$};
\draw (3.25,6) node {$x_4$};
\draw (3.25,5) node {$x_6$};

\draw (-0.25,2.5) node {(a)};

\draw (2.55,2.5) node {(b)};

\draw (6.55,2.5) node {(c)};

\draw (9.75,2.5) node {(d)};
\draw [black] (X9) -- (Y7);
\draw [black] (X8) -- (Y6);
\draw [black] (X7) -- (Y5);
\draw [black] (X9) -- (X8);
\draw [black] (X8) -- (X7);

\draw [black] (X1) -- (Y1) -- (X2);
\draw [black] (Y1) -- (X3);
\draw [black] (X1) -- (Y2) -- (X3); 
\draw [black] (Y2) -- (X4);
\draw [black] (X3) -- (Y3) -- (X5);
\draw [black] (X1) -- (Y3);
\draw [black] (X3) -- (Y4) -- (X5);
\draw [black] (Y4) -- (X6);

\draw (5,8.7) node {$x_1$};
\draw (5,7.95) node {$x_2$};
\draw (5,7.2) node {$x_3$};
\draw (5,6.45) node {$x_4$};
\draw (5,5.7) node {$x_5$};
\draw (5,4.95) node {$x_6$};
\draw (5,4.22) node {$x_7$};
\draw (5,3.45) node {$x_8$};

\draw (8,8.25) node {$y_1$};
\draw (8,7.25) node {$y_2$};
\draw (8,6.25) node {$y_3$};
\draw (8,5.25) node {$y_4$};
\draw (8,4.25) node {$y_5$};

\coordinate [dot,fill = black] (X8) at (5,3.25);
\coordinate [dot,fill = black] (X7) at (5,4);
\coordinate [dot,fill = black] (X6) at (5,4.75);
\coordinate [dot,fill = black] (X5) at (5,5.5);
\coordinate [dot,fill = black] (X4) at (5,6.25);
\coordinate  [dot,fill = black](X3) at (5,7);
\coordinate  [dot,fill = black](X2) at (5,7.75);
\coordinate  [dot,fill = black](X1) at (5,8.5);

\coordinate[dot,fill = black] (Y5) at (8,4);
\coordinate[dot,fill = black] (Y4) at (8,5);
\coordinate [dot,fill = black] (Y3) at (8,6);
\coordinate [dot,fill = black] (Y2) at (8,7);
\coordinate [dot,fill = black] (Y1) at (8,8);

\coordinate [dot,fill = black] (X9) at (9,5);
\coordinate [dot,fill = black] (X10) at (9,6);
\coordinate [dot,fill = black] (X11) at (9,7);

\coordinate [dot,fill = black] (Y6) at (10,4.5);
\coordinate [dot,fill = black] (Y7) at (10,5.5);
\coordinate [dot,fill = black] (Y8) at (10,6);
\coordinate [dot,fill = black] (Y9) at (10,6.5);
\coordinate [dot,fill = black] (Y10) at (10,7.5);

\draw (8.75,7) node {$x_1$};
\draw (8.75,6) node {$x_3$};
\draw (8.75,5) node {$x_5$};

\draw (10.25,7.55) node {$x_4$};
\draw (10.25,6.5) node {$x_2$};
\draw (10.25,6) node {$x_7$};
\draw (10.25,5.55) node {$x_6$};
\draw (10.25,4.55) node {$x_8$};

\draw [black] (X11) -- (Y10);
\draw [black] (X11) -- (Y9);
\draw [black] (X11) -- (X10);
\draw [black] (X10) -- (Y8);
\draw [black] (X10) -- (X9);
\draw [black] (X9) -- (Y7);
\draw [black] (X9) -- (Y6);

\draw [black] (X3) -- (Y1) -- (X7);
\draw [black] (X1) -- (Y2) -- (X3);
\draw [black] (X1) -- (Y3) -- (X3);
\draw [black] (X2) -- (Y3) -- (X4);
\draw [black] (X3) -- (Y4) -- (X1);
\draw [black] (X1) -- (Y4) -- (X5);
\draw [black] (X3) -- (Y5) -- (X1);
\draw [black] (X5) -- (Y5) -- (X6);
\draw [black] (X6) -- (Y5) -- (X8);

\end{tikzpicture}}
\end{center}

 \caption{(a) A comb-convex bipartite graph $G_1$, (b) a comb representation of $G_1$, (c) a caterpillar-convex bipartite graph $G_2$, and (d) a caterpillar representation for $G_2$.} \label{fig:comb}
\end{figure}
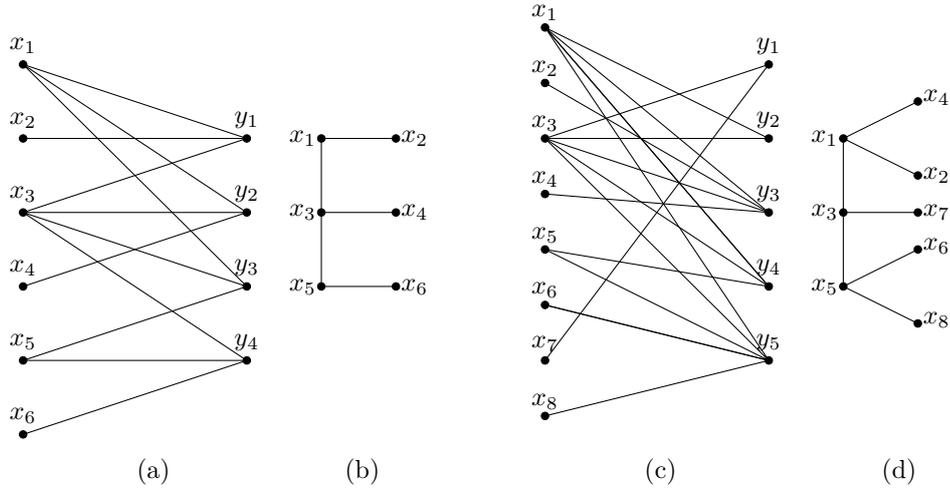%
The classes of bipartite graphs
of interest to us are defined via a convexity
condition for the neighborhoods of the vertices on one side of the graph with respect to a tree defined
on the vertices of the other side. The following types of trees are relevant here:
A \emph{star} is a tree of diameter at most~$2$.
A \emph{comb} is a tree that consists of a chordless
path $P$, called the \emph{backbone}, with a single leaf neighbor
attached to each backbone vertex~\cite{chen}.
A \emph{caterpillar} is a tree that consists of a chordless path~$P$,
called the \emph{backbone}, with an arbitrary number (possibly zero) of leaf vertices attached to each vertex on~$P$.
Note that if a caterpillar has exactly one leaf vertex attached to each vertex on~$P$, then
that caterpillar is a comb.

A bipartite graph $G=(X\cup Y,E)$ is called a \emph{star-convex} (or \emph{comb-convex}, or \emph{caterpillar-convex}) bipartite graph if a star (or comb, or caterpillar) $T=(X,F)$  can be defined on $X$ such that for each vertex $y \in Y$, its neighborhood $N_G(y)$ induces a subtree of $T$.
The star (or comb, or caterpillar) $T=(X,F)$ is then called a
\emph{star representation} (or \emph{comb representation}, or \emph{caterpillar representation}) of~$G$.

Figure~\ref{fig:comb} shows an example of a comb-convex bipartite graph and its comb representation, and a caterpillar-convex bipartite graph and its caterpillar representation. Both the comb
(in part (b)) and the caterpillar (in part (d)) have the
path $P=x_1 x_3 x_5$ as backbone.

The remainder of this paper is organized as follows. In Section 2, we discuss related work. In Section 3, we give a polynomial-time algorithm for \licol\ for caterpillar-convex bipartite graphs (and thus also for comb-convex bipartite graphs). In Section 4, we give a polynomial-time recognition algorithm for caterpillar-convex bipartite graphs. In Section 5, we give concluding remarks.

\section{Related Work}

Deciding whether a graph has a proper coloring with $k$ colors is
polynomial-time solvable when $k=1$ or $2$ \cite{lovasz} and NP-complete for $k\geq3$ \cite {vizing}. 
As \likcol\ generalizes this problem, it is also NP-complete for $k\geq 3$. 
When the list coloring problem is restricted to perfect graphs and their subclasses, it is still NP-complete in
many cases such as for bipartite graphs \cite{kubale} and interval graphs~\cite{biro}.
On the other hand, it is polynomially solvable for trees
and graphs of bounded treewidth \cite{jansen}.
The problems {\sc Li $k$-col} and {\sc $k$-List Coloring} are polynomial-time solvable if $k\le 2$ and NP-complete if $k\ge 3$ \cite{lovasz,vizing}.
{\sc $k$–list coloring} has been shown to be NP-complete for small values of $k$ for complete bipartite graphs and cographs by Jansen and Scheffler \cite{jansen}, as observed in \cite{golovach}.
The {\sc $3$-List Coloring} problem is NP-complete even if each color occurs in the lists of at most
three vertices in planar graphs with maximum degree three, as shown by Kratochvil and Tuza~\cite{jkratochvil}.

We use the following standard notation for specific graphs:
$P_t$ denotes a path with $t$ vertices;
$K_t$ denotes a clique with $t$ vertices;
$K_{\ell,r}$ denotes a complete bipartite subgraph with parts of sizes $\ell$ and $r$;
$K_{1,s}^1$ denotes the $1$-subdivision of $K_{1,s}$ (i.e., every
edge $e=\{u,v\}$ of $K_{1,s}$ is replaced by two edges $\{u,w_e\}$ and $\{w_e,v\}$,
where $w_e$ is a new vertex);
and $sP_1 + P_5$ is the disjoint union of $s$ isolated vertices and a~$P_5$.
\likcol\ is known to be NP-complete even for $k = 3$ within the class of $3$–regular planar bipartite graphs \cite{jkratochvil3}. 
On the positive side, for fixed $k \geq 3$, {\sc Li $k$-col} is polynomially solvable for $P_5$-free graphs \cite{hong}. {\sc Li $3$-col} is polynomial for $P_6$-free graphs \cite{broersma} and for $P_7$-free graphs \cite{bonomo}.
{\sc Li $3$-col} is polynomial-time solvable for $(K^1_{1,s}, P_t)$-free graphs for every $s \geq 1$ and $t \geq 1$ \cite{chudnovsky}. {\sc Li $k$-col} is polynomial-time solvable for $(sP_1 + P_5)$-free graphs, which was proven for $s = 0$ by Ho{\`{a}}ng et al.~\cite{hong} and for every $s \geq 1$ by Couturier et al.~\cite{couturier}. 

 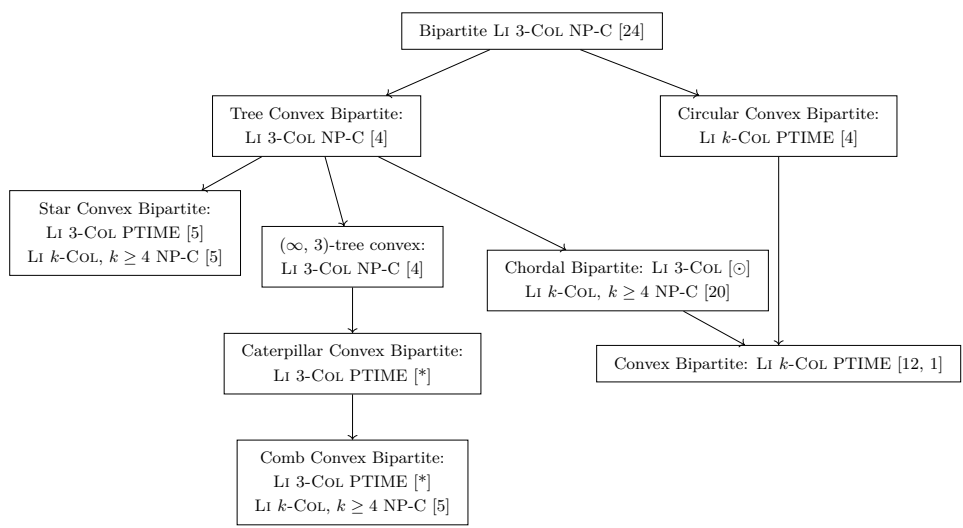
\begin{figure}[bt]
  \begin{center}
  \scalebox{0.7}{
  \begin{tikzpicture}[scale=0.9]

  \node[draw] (CxB) at (7.15,15) {
  \begin{tabular}{c} Convex Bipartite: {\sc Li $k$-Col} PTIME \cite{josep,belmonte} \end{tabular}};

    \node[draw] (ChB) at (4,16.75) {
  \begin{tabular}{c} Chordal Bipartite: {\sc Li $3$-Col} [$\odot$]\\{\sc Li $k$-Col, $k\geq 4$} NP-C \cite{huang} \end{tabular}};
  
  \node[draw] (CCxB) at (7.15,20) {
  \begin{tabular}{c} Circular Convex Bipartite: \\{\sc Li $k$-Col} PTIME \cite{flavia} \end{tabular}};
  
  \node[draw] (PB) at (2,22) {
  \begin{tabular}{c} Bipartite {\sc Li $3$-Col} NP-C \cite{kubale} \end{tabular}};

  \node[draw] (TCB) at (-2.5,20) {
  \begin{tabular}{c} Tree Convex Bipartite: \\{\sc Li $3$-Col} NP-C \cite{flavia} \end{tabular}};

  \node[draw] (SCB) at (-6.5,17.75) {
  \begin{tabular}{c} Star Convex Bipartite: \\ {\sc Li $3$-Col} PTIME \cite{brettell_arxiv}\\ {\sc Li $k$-Col}, $k\geq 4$ NP-C \cite{brettell_arxiv}\end{tabular}};

 \node[draw] (ITC) at (-1.75,17.25) {
  \begin{tabular}{c} ($\infty$, 3)-tree convex: \\ {\sc Li $3$-Col} NP-C \cite{flavia} \end{tabular}};

  \node[draw] (CCB) at (-1.75,15) {
  \begin{tabular}{c} Caterpillar Convex Bipartite: \\{\sc Li $3$-Col} PTIME [*]\\ \end{tabular}};

   \node[draw] (COB) at (-1.75,12.5) {
  \begin{tabular}{c} Comb Convex Bipartite: \\{\sc Li $3$-Col} PTIME [*]\\ {\sc Li $k$-Col}, $k\geq 4$ NP-C \cite{brettell_arxiv}\end{tabular}};

  \path (PB) edge[->] (TCB);
  \path (ITC) edge[->] (CCB);
  \path (TCB) edge[->] (ITC);
  \path (TCB) edge[->] (ChB);
  \path (CCB) edge[->] (COB);
  \path (CCxB) edge[->] (CxB);
  \path (TCB) edge[->] (SCB);
   \path (CxB) edge[<-] (ChB);
  \path (CCxB) edge[<-] (PB);
  
  \end{tikzpicture}
  }
  
 \end{center}
  
 \caption{Computational complexity results for {\sc Li $k$-Col} on subclasses of bipartite graphs. [*] refers to this paper, [$\odot$] refers to an open problem, NP-C denotes NP-complete problem, PTIME denotes polynomial-time solvable problem. }
 \label{fig:cat}
 \end{figure}
An overview of complexity results for \likcol\ in some subclasses of bipartite graphs is shown in Fig.~\ref{fig:cat}.
The computational complexity of {\sc Li $3$-col} for chordal bipartite graphs has been stated as an open problem in $2015$ \cite{huang} and has been of interest since then~\cite{josep}. In \cite{josep} a partial answer is given to this question by showing that {\sc Li $3$-col} is polynomial in the class of biconvex bipartite graphs and convex bipartite graphs. {\sc Li $3$-col} is solvable in polynomial time when it is restricted to graphs with all connected induced subgraphs having a multichain ordering  \cite{enright}. This result can be applied to permutation graphs and interval graphs. In \cite{josep}, it is shown that connected biconvex bipartite graphs have a multichain ordering, implying a polynomial time algorithm for {\sc Li $3$-col} on this graph class.
They also provide a dynamic programming algorithm to solve {\sc Li $3$-col} in the class of convex bipartite graphs and show how to modify the algorithm to solve the more general {\sc Li $H$-col} problem on convex bipartite graphs.
The computational complexity of {\sc Li $3$-col} for $P_8$-free bipartite graphs is open \cite{biro}. Even the restricted case of {\sc Li $3$-col} for $P_8$-free chordal bipartite graphs is open. Golovach et al.~\cite{golovach} survey
results for \likcol\ on $H$-free graphs in terms of the structure of~$H$.

So-called \emph{width parameters} play a crucial role in algorithmic complexity.
For various combinatorial problems, it is possible to find a polynomial-time solution by exploiting bounded width parameters such as mim-width, sim-width and clique-width.
Given a graph class, it is known that when mim-width is bounded, then {\sc Li $k$-col} is polynomial-time solvable~\cite{brettell}. Brettell et al.~\cite{brettell} proved that for every $r \geq 1, s \geq 1$ and $t \geq 1$, the mim-width is bounded and quickly computable for $(K_r, K^1_{1,s}, P_t)$-free graphs. This result further implies that for every $k \geq 1, s \geq 1$ and $t \geq 1$, {\sc Li $k$-col} is polynomial-time solvable for $(K^1_{1,s}, P_t)$-free graphs. 
Most recently, Bonomo-Braberman et al.~\cite{flavia} showed that mim-width is unbounded for star-convex and comb-convex bipartite graphs. On the other hand, {\sc Li $3$-col} is polynomial-time solvable for star-convex bipartite graphs whereas {\sc Li $k$-col} is NP-complete for $k\geq 4$~\cite{brettell_arxiv}. 
Furthermore, Bonomo-Braberman et al.~\cite{brettell_arxiv} show that for comb-convex bipartite graphs, {\sc Li $k$-col} remains NP-complete for $k\geq 4$ and leave open the computational complexity of {\sc Li $3$-col} for this graph class. In this paper, we resolve this problem by showing that {\sc Li $3$-col} is polynomial-time solvable even for caterpillar-convex bipartite graphs.

As for the recognition of graph classes, Bonomo-Braberman et al.~\cite{brettell_arxiv} provide an algorithm for the recognition of $(t, \Delta)$-tree convex bipartite graphs by using a result from Buchin et al.~\cite{buchin}. Here, a tree is a $(t,\Delta)$-tree if the maximum degree is bounded by $\Delta$ and the tree contains at most $t$ vertices of degree at least~$3$. This result for the recognition of $(t, \Delta)$-tree convex bipartite graphs, however, does not apply to caterpillar-convex bipartite graphs. Therefore, we give a novel algorithm for the recognition of caterpillar-convex bipartite graphs.

\section{List 3-Coloring Caterpillar-Convex Bipartite Graphs}
\label{sec:li3col}
In this section we give a polynomial-time algorithm for solving
\licol\ in caterpillar-convex bipartite graphs.
Let a caterpillar-convex bipartite graph $G=(X\cup Y,E)$ be given,
together with a 3-list assignment~$\calL$.
We assume that a caterpillar $T=(X,F)$ is also given, where
$N_G(y)$ induces a subtree of $T$ for each $y\in Y$.
If the caterpillar is not provided as part of the input,
we can compute one in polynomial time using the recognition algorithm that
we present in Section~\ref{sec:recognition}.

Let $T$ consist of a backbone $B$ with vertices
$b_1,b_2,\ldots,b_n$ (in that order) and a set of
leaves $L(b_i)$, possibly empty, attached to each
$b_i\in B$. We use $L$ to denote the set of
all leaves, i.e., $L=\bigcup_{i=1}^n L(b_i)$.
Furthermore, for any $1\le i\le j \le n$,
we let $B_{i,j}=\{b_i,b_{i+1},\ldots,b_j\}$
and $L_{i,j}=\bigcup_{k=i}^j L(b_k)$.

The idea of the algorithm is to define suitable subproblems
that can be solved in polynomial time, and to obtain the overall
coloring as a combination of solutions to subproblems.
Roughly speaking, the subproblems consider stretches of the
backbone in which all backbone vertices are assumed to be
assigned the same color in a proper list 3-coloring.
More precisely, a subproblem $SP(i,j,c_1,c_2,c_3)$ is specified via
two values $i,j$ with $1\le i\le j\le n$ and three
colors $c_1,c_2,c_3$ with $c_1\neq c_2$ and $c_2\neq c_3$ where $c_i \in [3],$ for $ i=1, 2, 3$.
Hence, there are $O(n^2)$ subproblems.

The subproblem $S=SP(i,j,c_1,c_2,c_3)$
is concerned with the subgraph $G_S$ of $G$ induced
by $B_{i-1,j+1}\cup L_{i,j} \cup \{y\in Y\mid N(y)\cap(B_{i,j}\cup L_{i,j})\neq\emptyset\}$.
It assumes that color $c_1$ is assigned to $b_{i-1}$,
color $c_2$ to the backbone vertices from $b_i$ to $b_j$,
and color $c_3$ to $b_{j+1}$. See Fig.~\ref{fig:SP} for
an illustration of $SP(i,j,2,1,2)$. Solving the subproblem $S$
means determining whether this coloring of $B_{i-1,j+1}$
can be extended to a proper list 3-coloring of $G_S$.
The result of the subproblem is False if this is not possible,
or True (and a suitable proper list 3-coloring of $G_S$) otherwise.
If $c_1\notin \calL(b_{i-1})$, or $c_3\notin \calL(b_{j+1})$,
or $c_2\notin \calL(b_k)$ for some $i\le k\le j$, then the
result of the subproblem is trivially False.

We will show that this subproblem can be solved in polynomial time
as it can be reduced to the 2-list coloring problem, which is
known to be solvable in linear time~\cite{Edwards86,kobler}.
Furthermore, solutions to consecutive `compatible' subproblems
can be combined, and a proper list 3-coloring of $G$ exists
if and only if there is a collection of subproblems whose
solutions can be combined into a list 3-coloring of~$G$.
For example, the colorings of two subproblems
$SP(i,j,c_1,c_2,c_3)$ and $SP(j+1,k,c_2,c_3,c_4)$ can be combined
because they agree on the colors of backbone vertices that are
in both subproblems, they do not share any leaf vertices, and
the vertices $y\in Y$ that have neighbors in both
$B_{i,j}\cup L_{i,j}$ and $B_{j+1,k}\cup L_{j+1,k}$
must be adjacent to $b_{j}$ and $b_{j+1}$,
which are colored with colors $c_2$ and $c_3$ (where $c_2\neq c_3$)
in the colorings of both subproblems, and hence must have received
the same color (the only color in $\{1,2,3\}\setminus\{c_2,c_3\}$)
in both colorings. To check whether there is a collection of
compatible subproblems whose solutions can be combined into a list
3-coloring of $G$, we will show that it suffices to search for
a directed path between two vertices in an auxiliary directed
acyclic graph (DAG) on the subproblems whose result is
True.

For a subproblem $S=SP(i,j,c_1,c_2,c_3)$,
if $i=1$, there is no vertex $b_{i-1}$, and we write
$\ast$ for $c_1$; similarly, if $j=n$, there is no
vertex $b_{j+1}$, and we write $\ast$ for $c_3$.
The graph $G_S$ considered when solving such a subproblem
does not contain $b_{i-1}$ or $b_{j+1}$, respectively,
but is otherwise defined analogously.
If $i=1$ and $j=n$, then $G_S$~contains neither
$b_{i-1}$ nor $b_{j+1}$.

\begin{lemma}
\label{lem:SP}
There is a linear-time algorithm for solving
any subproblem of the form $SP(i,j,c_1,c_2,c_3)$.
\end{lemma}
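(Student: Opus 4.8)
The plan is to reduce the subproblem $S=SP(i,j,c_1,c_2,c_3)$ to an instance of the $2$-list coloring problem, which is solvable in linear time~\cite{Edwards86,kobler}. First I would dispose of the trivial cases: if $c_1\notin\calL(b_{i-1})$, $c_3\notin\calL(b_{j+1})$, or $c_2\notin\calL(b_k)$ for some $i\le k\le j$ (ignoring $b_{i-1}$ when $i=1$ and $b_{j+1}$ when $j=n$), return \textbf{False}. Otherwise, color $b_{i-1},b_i,\dots,b_j,b_{j+1}$ as prescribed, and for every remaining vertex $v$ of $G_S$ let $\calL'(v)$ be $\calL(v)$ with the colors of its already-colored neighbors removed; if some $\calL'(v)=\emptyset$, return \textbf{False}. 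What remains is a list coloring instance, with the lists $\calL'$, on the bipartite graph $H$ induced on the uncolored vertices of $G_S$, whose two sides are the leaves in $L_{i,j}$ and the $Y$-vertices of $G_S$.

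The heart of the argument consists of two structural facts, which I would derive from the assumption that $N_G(y)$ induces a subtree of the caterpillar $T$ for every $y\in Y$. \emph{(i)}~Every $Y$-vertex $y$ of $G_S$ either is adjacent in $G$ to some backbone vertex of $B_{i,j}$, or satisfies $N_G(y)=\{\ell\}$ for a single leaf $\ell$: a subtree of $T$ that meets $L_{i,j}$ and that either contains a backbone vertex or contains two leaves must contain some $b_k$ with $i\le k\le j$. In the first case $y$ is adjacent to a vertex colored $c_2$, so $c_2\notin\calL'(y)$ and hence $|\calL'(y)|\le 2$; in the second case $y$ has degree~$1$ in $H$. \emph{(ii)}~For a leaf $\ell\in L(b_k)$ with $i\le k\le j$, every neighbor $y$ of $\ell$ with $|N_G(y)|\ge 2$ contains $b_k$ in $N_G(y)$ and hence has $c_2\notin\calL'(y)$; so every non-pendant neighbor of $\ell$ in $H$ avoids the color $c_2$ (and $\ell$ itself has no backbone neighbor, since $G$ is bipartite with $\ell,b_k\in X$).

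Using these facts I would repeatedly apply the following answer-preserving reduction rules, always prioritizing the first: unit propagation (if $|\calL'(v)|\le 1$ for some $v$, return \textbf{False} when the list is empty, and otherwise color $v$ with its unique color, delete $v$, and remove that color from its neighbors' lists); elimination of low-degree vertices (delete any $v$ of degree at most~$1$ with $|\calL'(v)|\ge 2$, to be colored last, which is always possible as its at most one neighbor forbids at most one color); and forcing of full-list leaves (if a leaf $\ell$ has $\calL'(\ell)=\{1,2,3\}$, set $\phi(\ell):=c_2$, delete $\ell$, and remove $c_2$ from its neighbors' lists). The last rule is correct by fact~(ii): all non-pendant neighbors of $\ell$ already avoid $c_2$ and $\ell$ has no backbone neighbor, while, unit propagation having been exhausted, every pendant neighbor of $\ell$ has a list of size at least~$2$ and can therefore be recolored to avoid $c_2$ if needed; hence any list coloring of $H$ can be turned into one with $\phi(\ell)=c_2$, and conversely deleting $\ell$ together with $c_2$ from its neighbors' lists keeps the answer unchanged. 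When no rule applies, facts~(i) and~(ii), together with the exhausted unit propagation, force every surviving vertex to have a list of size exactly~$2$, so $H$ has become a $2$-list coloring instance; I solve it in linear time~\cite{Edwards86,kobler} and, on success, reinsert the deleted vertices in reverse order of deletion, coloring each as its rule prescribes. Since $G_S$ has $O(|V(G)|+|E(G)|)$ total size and the reductions can be driven to completion with bucket queues keyed by list size and degree in total linear time, the whole procedure runs in linear time. The step I expect to require the most care is fact~(ii): it is precisely the structure of a caterpillar representation---that the non-pendant neighbors of a leaf are all adjacent to its unique backbone parent, which carries color $c_2$---that makes it safe to force $c_2$ on full-list leaves and thereby collapses list $3$-coloring on $G_S$ to $2$-list coloring.
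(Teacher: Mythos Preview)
Your proposal is correct and follows essentially the same route as the paper: fix the backbone colors, propagate, exploit the caterpillar structure to force $c_2$ on any leaf with a full list, and reduce to $2$-list coloring. The only cosmetic difference is that the paper handles a $Y$-vertex with a size-$3$ list by truncating its list (noting it must be pendant), whereas you dispose of such vertices via your low-degree elimination rule; both are standard and equivalent here.
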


\begin{proof}
Consider the subproblem $S=SP(i,j,c_1,c_2,c_3)$.
Let $G_S$ be the subgraph of $G$ defined by~$S$,
and let $X_S\subseteq X$, $Y_S\subseteq Y$ be such
that the vertex set of $G_S$ is $X_S\cup Y_S$.
First, we check whether
$c_1\in \calL(b_{i-1})$ (only if $i>1$),
$c_3\in \calL(b_{j+1})$ (only if $j<n$),
and $c_2\in \calL(b_k)$ for all $i\le k\le j$.
If one of these checks fails, we return False.
Otherwise, we assign color $c_1$ to $b_{i-1}$,
color $c_2$ to all vertices in $B_{i,j}$, and
color $c_3$ to $b_{j+1}$.

For every vertex $y\in Y_S$, we check if $N(y)$
contains any vertices of $B_{i-1,j+1}$ and, if so,
remove the colors of those vertices from $\calL(y)$ (if they
were contained in $\calL(y)$).
If the list of any vertex $y\in Y_S$ becomes empty
in this process, we return False.

Let $B_S$ denote the backbone vertices in $X_S$
and $L_S$ the leaf vertices in $X_S$ (with respect
to the caterpillar $T$).
If there is a vertex in $L_S$ or $Y_S$ with a list
of size~$1$, assign the color in that list to that
vertex and remove that color from the lists of its
neighbors (if it is contained in their lists).
Repeat this operation until there is no uncolored
vertex with a list of size~$1$. (If an uncolored
vertex with a list of size~$1$ is created later
on in the algorithm, the same operation is applied
to that vertex.)
If the list of any vertex becomes empty in this
process, return False.
Otherwise, we must arrive at a state where
all uncolored vertices in $G_S$ have lists
of size 2 or~3.

If there is a vertex $y\in Y_S$ with a list of
size~3, that vertex must be adjacent to a single
leaf $\ell$ in $L_S$ (as it cannot be adjacent to a backbone
vertex). In this case we remove an arbitrary color from $\calL(y)$:
This is admissible as, no matter what color
$\ell$ receives in a coloring, vertex $y$ can always
be colored with one of the two colors that have remained in its list.

If there is a vertex $\ell \in L_S$ with a list
of length $3$, assign color $c_2$ to $\ell$
(and remove color $c_2$ from the lists of vertices
in $N(\ell)$). This color assignment does not
affect the existence of a proper list 3-coloring
for the following reasons (where we let $b_k$ denote the
backbone vertex with $\ell\in L(b_k)$):
\begin{itemize}
    \item If a vertex $y\in N(\ell)$ is adjacent to more
    than one vertex, it must be adjacent to $b_k$, which
    has been colored with~$c_2$, and hence it cannot receive color $c_2$
    in any case.
    \item If a vertex $y\in N(\ell)$ is adjacent only
    to $\ell$ and no other vertex, then $y$ can still
    be colored after $\ell$ is assigned color $c_2$,
    because we cannot have $\calL(y)=\{c_2\}$;
    this is because, 
    if $y$ had the list $\calL(y)=\{c_2\}$,
    it would have been colored $c_2$ and the color
    $c_2$ would have been removed from $\calL(\ell)$.
\end{itemize}
If at any step of this process, an uncolored vertex
with an empty list is created, return False. Otherwise,
we arrive at an instance $I$ of \licol\ where all uncolored
vertices have lists of size~$2$. Such an instance
can be solved in linear time~\cite{Edwards86,kobler} (via reduction to
a 2SAT problem).
If $I$ admits a proper list 3-coloring, that coloring
gives a proper list 3-coloring of $G_S$, and we
return True and that coloring. Otherwise, we return
False.

Correctness of the algorithm follows from its description,
and the algorithm can be implemented to run in linear
time using standard techniques.
\end{proof}

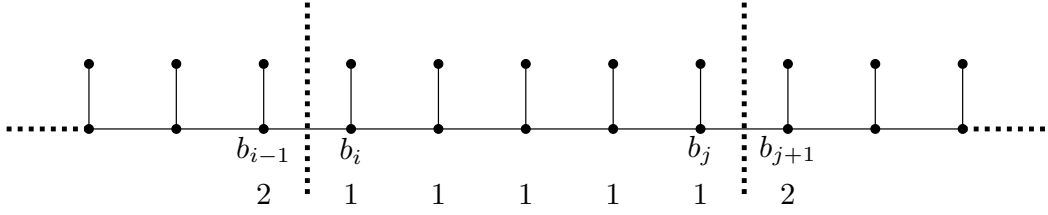
\begin{figure}[t] 
\begin{center}
\resizebox{0.99\textwidth}{!}{%
\begin{tikzpicture}[dot/.style={draw,circle,minimum size=1mm,inner sep=0pt,outer sep=0pt}]

\coordinate [dot,fill = black] (X1) at (-0.25,3);
\coordinate [dot,fill = black] (X2) at (0.75,3);
\coordinate [dot,fill = black] (X3) at (1.75,3);
\coordinate  [dot,fill = black](X4) at (2.75,3);
\coordinate  [dot,fill = black](X5) at (3.75,3);
\coordinate  [dot,fill = black](X6) at (4.75,3);
\coordinate  [dot,fill = black](X7) at (5.75,3);

\coordinate  [dot,fill = black](X8) at (6.75,3);
\coordinate  [dot,fill = black](X9) at (7.75,3);
\coordinate  [dot,fill = black](X10) at (8.75,3);
\coordinate  [dot,fill = black](X11) at (9.75,3);

\coordinate[dot,fill = black] (Y1) at (-0.25,3.75);
\coordinate [dot,fill = black] (Y2) at (0.75,3.75);
\coordinate [dot,fill = black] (Y3) at (1.75,3.75);
\coordinate [dot,fill = black] (Y4) at (2.75,3.75);
\coordinate [dot,fill = black] (Y5) at (3.75,3.75);
\coordinate [dot,fill = black] (Y6) at (4.75,3.75);
\coordinate [dot,fill = black] (Y7) at (5.75,3.75);
\coordinate [dot,fill = black] (Y8) at (6.75,3.75);
\coordinate [dot,fill = black] (Y9) at (7.75,3.75);
\coordinate [dot,fill = black] (Y10) at (8.75,3.75);
\coordinate [dot,fill = black] (Y11) at (9.75,3.75);

\draw [black] (X1) -- (Y1);
\draw [black] (X1) -- (X2);
\draw [black] (X2) -- (Y2);
\draw [black] (X2) -- (X3);
\draw [black] (X3) -- (Y3);
\draw [black] (X3) -- (X4);
\draw [black] (X4) -- (Y4);
\draw [black] (X4) -- (X5);
\draw [black] (X5) -- (Y5);
\draw [black] (X5) -- (X6);
\draw [black] (X6) -- (Y6);
\draw [black] (X6) -- (X7);
\draw [black] (X7) -- (Y7);
\draw [black] (X7) -- (X8);
\draw [black] (X8) -- (Y8);
\draw [black] (X8) -- (X9);
\draw [black] (X9) -- (Y9);
\draw [black] (X9) -- (X10);
\draw [black] (X10) -- (Y10);
\draw [black] (X10) -- (X11);
\draw [black] (X11) -- (Y11);

\draw[dotted] [ultra thick](-0.25,3) -- (-1.25,3);
\draw[dotted] [ultra thick](9.75,3) -- (10.75,3);

\draw[dotted] [ultra thick](2.25,2.25) -- (2.25,4.5);
\draw[dotted] [ultra thick](7.25,2.25) -- (7.25,4.5);

\draw (1.75,2.75) node {$b_{i-1}$};
\draw (1.75,2.25) node {$2$};
\draw (2.75,2.75) node {$b_i$};
\draw (2.75,2.25) node {$1$};
\draw (3.75,2.25) node {$1$};
\draw (4.75,2.25) node {$1$};
\draw (5.75,2.25) node {$1$};
\draw (6.75,2.25) node {$1$};
\draw (6.75,2.75) node {$b_j$};

\draw (7.75,2.75) node {$b_{j+1}$};
\draw (7.75,2.25) node {$2$};

\end{tikzpicture}}
\end{center}

 \caption{Illustration of subproblem $SP(i,j,c_1,c_2,c_3)$ for the case $SP(i, j, 2, 1, 2)$.} 
 \label{fig:SP}
\end{figure}

\begin{algorithm}[tbp]
\caption{List-3-Coloring Algorithm for Caterpillar-Convex Bipartite Graphs}
\label{algo1}
\begin{algorithmic}[1]
\Require A caterpillar-convex bipartite graph $G=(X \cup Y, E)$ (with caterpillar $T=(X,F)$) and a list assignment $\mathcal{L}$.
\Ensure A proper coloring that obeys $\mathcal{L}$, or False if no such coloring exists.
\LeftComment{Compute solutions to all subproblems}
\For {$i=1$ to $n$}
    \For {$j=i$ to $n$}
        \For {$c_i\in [3], i=1, 2, 3$ 
        with $c_1 \neq c_2$ and $c_2\neq c_3$}
        \State \(\triangleright\) let $c_1=\ast$ if $i=1$ and $c_3=\ast$ if $j=n$
        \State Solve $SP(i, j, c_1, c_2, c_3)$ (Lemma~\ref{lem:SP})
        \EndFor
    \EndFor
\EndFor
\LeftComment{Check if solutions of subproblems can be combined
into a list 3-coloring of $G$}
\State Build a DAG $H$ whose vertices are $s$, $t$, and
a vertex $SP(i,j,c_1,c_2,c_3)$ for each subproblem with
answer True (Definition~\ref{def:DAG})
\If{$H$ contains a directed path $P$ from $s$ to $t$}
\State Return the coloring obtained as union of the
colorings of the subproblems on~$P$
\Else
\State Return False
\EndIf
\end{algorithmic}
\end{algorithm}

Call a subproblem $S=SP(i,j,c_1,c_2,c_3)$ \emph{valid} if its
answer is True (and a proper list 3-coloring of $G_S$ has been produced), and \emph{invalid}
otherwise. To check whether the colorings obtained from valid subproblems
can be combined into a list 3-coloring of $G$, we make use of an auxiliary
DAG $H$ constructed as follows. The existence of a proper list 3-coloring of $G$ can then be
determined by checking whether $H$ contains a directed path
from $s$ to~$t$.

\begin{definition}
\label{def:DAG}
The auxiliary DAG $H=(V_H,A)$ has vertices $s$, $t$, and
a vertex for each valid subproblem $SP(i,j,c_1,c_2,c_3)$.
Its arc set $A$ contains the following arcs:
    An arc $(s,SP(1,i,\ast,c_2,c_3))$ for
    each $i<n$ and $c_2,c_3\in[3]$ such that $SP(1,i,\ast,c_2,c_3)$ is valid;
    an arc $(SP(i,n,c_1,c_2,\ast),t)$ for
    each $i>1$ and $c_1,c_2\in[3]$ such that $SP(i,n,c_1,c_2,\ast)$ is valid;
    arcs $(s,SP(1,n,\ast,c_2,\ast))$ and
    $(SP(1,n,\ast,c_2,\ast),t)$ if $SP(1,n,\ast,c_2,\ast)$ is valid;
    an arc $(SP(i,j,c_1,c_2,c_3),SP(j+1,k,c_2,c_3,c_4)$
    for each $i\le j\le k-1$ and each $c_1,c_2,c_3,c_4\in[3]$ (or $c_1=\ast$
    or $c_4=\ast$ if $i=1$ or $k=n$, respectively)
    such that $SP(i,j,c_1,c_2,c_3)$ and $SP(j+1,k,c_2,c_3,c_4)$
    are both valid.
\end{definition}

\begin{theorem}
\label{the:theorem1}%
\label{th:licol-cat}
\licol\ can be solved in polynomial time for
caterpillar-convex bipartite graphs.
\end{theorem}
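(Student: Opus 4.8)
The plan is to prove that Algorithm~\ref{algo1} is correct and runs in polynomial time. Polynomiality is the easy part: there are $O(n^2)$ subproblems $SP(i,j,c_1,c_2,c_3)$ (at most $n^2$ choices of $i\le j$ and a constant number of admissible colour triples), and each is solved in linear time by Lemma~\ref{lem:SP}; the DAG $H$ of Definition~\ref{def:DAG} has $O(n^2)$ vertices and $O(n^3)$ arcs, each of which can be tested in constant time once the subproblem answers are available; and a directed $s$--$t$ path can be found in time linear in the size of $H$. Hence the running time is polynomial in the size of $G$. What remains is the key equivalence: $G$ has a proper list $3$-colouring obeying $\calL$ if and only if $H$ contains a directed path from $s$ to $t$.

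For the ``if'' direction, I would take a directed $s$--$t$ path in $H$. By Definition~\ref{def:DAG}, such a path visits valid subproblems $S_1=SP(1,b_1,\ast,\cdot,\cdot),\ S_2,\ldots,\ S_M=SP(a_M,n,\cdot,\cdot,\ast)$ whose backbone index intervals $[a_t,b_t]$ (with $a_1=1$, $b_M=n$, and $a_{t+1}=b_t+1$) partition $\{1,\ldots,n\}$, and whose prescribed colours are compatible on the shared backbone vertices (this is exactly what the arc condition, together with the requirement $c_2\neq c_3$ in each subproblem, guarantees). I would then take the union $\phi$ of the colourings produced for $S_1,\ldots,S_M$ and show it is a well-defined proper list $3$-colouring of $G$. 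Well-definedness on the backbone and the absence of shared leaves are immediate from the interval partition. The crucial case is a vertex $y\in Y$ lying in $G_{S_t}$ for more than one~$t$: since $N_G(y)$ induces a subtree of the caterpillar $T$ and every leaf of $T$ is adjacent only to its backbone vertex, $N_G(y)\cap B$ is a contiguous stretch $\{b_p,\ldots,b_q\}$ of the backbone; if this stretch meets two of the intervals $[a_t,b_t]$, then it contains a run-boundary pair $b_{b_t},b_{b_t+1}$, which receive distinct colours in~$\phi$, so in each subproblem containing $y$ the colour of $y$ is forced to the single remaining colour, and (using that all prescribed backbone colourings are restrictions of one global backbone colouring) this forced colour is the same for every such~$t$. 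Since every edge of $G$ lies inside some $G_{S_t}$ and $\phi$ restricted to $G_{S_t}$ coincides with the proper, list-obeying colouring of that subproblem, $\phi$ is a proper list $3$-colouring of~$G$ — which is also exactly what the algorithm outputs.

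For the ``only if'' direction, I would start from a proper list $3$-colouring $\phi$ of $G$, look at the colour sequence $\phi(b_1),\ldots,\phi(b_n)$ along the backbone, and cut it into its maximal monochromatic runs $R_1,\ldots,R_M$ with colours $d_1,\ldots,d_M$ (so $d_t\neq d_{t+1}$ by maximality). For $t=1,\ldots,M$ let $S_t=SP(a_t,b_t,d_{t-1},d_t,d_{t+1})$, where $a_t,b_t$ are the endpoints of $R_t$ and $d_0,d_{M+1}$ are read as~$\ast$. Each $S_t$ is a legitimate subproblem ($d_{t-1}\neq d_t\neq d_{t+1}$), and it is valid because the restriction of $\phi$ to $G_{S_t}$ is a proper list $3$-colouring of $G_{S_t}$ assigning exactly the prescribed colours to the extended backbone, so the solver of Lemma~\ref{lem:SP} answers True. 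Finally, Definition~\ref{def:DAG} provides the arcs $(s,S_1)$, $(S_t,S_{t+1})$ for $1\le t<M$, and $(S_M,t)$ (or, when $M=1$, the arcs $(s,S_1)$ and $(S_1,t)$), so $s,S_1,\ldots,S_M,t$ is a directed path in~$H$.

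I expect the main obstacle to be the gluing step in the ``if'' direction: proving that the colourings returned for the subproblems along an $s$--$t$ path agree on every vertex they share. Shared backbone vertices are handled by the arc/compatibility conditions, but for vertices $y\in Y$ one genuinely needs to exploit that $N_G(y)$ is connected in the caterpillar representation~$T$ — this is what prevents $y$ from being adjacent to non-consecutive parts of the backbone and forces its colour whenever $y$ is relevant to two subproblems. Making this precise, including the degenerate cases $i=1$, $j=n$, and $M=1$, and checking that validity of the subproblems rules out a vertex whose backbone neighbourhood sees three distinct colours, is the technically delicate part of the proof.
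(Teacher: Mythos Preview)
Your proposal is correct and follows essentially the same route as the paper's proof: bound the running time via the $O(n^2)$ subproblems and the $O(n^3)$-size DAG, then establish correctness by showing that an $s$--$t$ path in $H$ yields a consistent global colouring (with the key observation that any $y\in Y$ shared by two consecutive subproblems is adjacent to both boundary backbone vertices and hence forced to the unique remaining colour), and conversely that maximal monochromatic backbone runs of a given proper list $3$-colouring produce a chain of valid, compatible subproblems. Your write-up is in fact somewhat more explicit than the paper's on the gluing step for $Y$-vertices spanning several subproblems, but the argument and structure are the same.
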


\begin{proof}
Let a caterpillar-convex bipartite graph $G=(X\cup Y,E)$ with
caterpillar representation $T=(X,F)$ be
given. Let $n$ denote the number of backbone vertices in~$T$,
and $|G|=|X|+|Y|+|E|$ the size of~$G$.
The algorithm, shown in Algorithm~\ref{algo1}, first
computes the solutions to all $O(n^2)$ subproblems $SP(i,j,c_1,c_2,c_3)$.
This can be done in linear time $O(|G|)$ per subproblem
(Lemma~\ref{lem:SP}), and thus in time $O(n^2 |G|)$ overall.

Then it constructs the auxiliary DAG $H$ (Definition~\ref{def:DAG})
and checks if $H$ contains a path $P$ from $s$ to $t$.
As $H$ contains $O(n^2)$ vertices and $O(n^3)$ edges,
the construction of $H$ and the check for the existence
of an $s$-$t$ path can be carried out in $O(n^3)$ time.

Finally, if an $s$-$t$ path $P$ is found, the colorings
corresponding to the subproblems on $P$ can be combined
into a list 3-coloring of $G$ in $O(|G|)$ time.
Thus, the overall running-time of the algorithm
can be bounded by $O(n^2|G|)$, which is polynomial
in the size of the input. If the caterpillar representation
$T=(X,F)$ is not given as part of the input, it can
be computed via our recognition algorithm
(Section~\ref{sec:recognition}) in polynomial time
(the proof of Theorem~\ref{th:recogcat} shows
that the time for computing $T=(X,F)$
is at most $O(|X\cup Y|^3)$).

To show that the algorithm is correct, assume first
that the algorithm finds an $s$-$t$ path $P$ in~$H$.
Let $\mathcal{S}$ be the set of valid subproblems
on~$P$. By construction, each backbone vertex
receives the same color in the at most three
subproblems in $\mathcal{S}$ in which it occurs.
Each leaf of the caterpillar occurs in exactly
one subproblem in $\mathcal{S}$. Every vertex
in $Y$ that occurs in more than one subproblem
in $\mathcal{S}$ must receive the same color
in each such subproblem (because it must
be adjacent to the two backbone vertices
with different colors at the border between
any two consecutive subproblems in which it
is contained). The other vertices in $Y$ occur
in only one subproblem. Hence, the coloring
obtained by the algorithm is a proper list 3-coloring
of~$G$.

For the other direction, assume that $G$ admits
a proper list 3-coloring. Then partition the
backbone $b_1,b_2,\ldots,b_n$ into maximal
segments $B_{1,i_1}, B_{i_1+1,i_2}, \ldots,$
$B_{i_{k-1}+1,i_k}$ for some $k\ge 1$ and
$i_k=n$, so that all backbone vertices in each
segment receive the same color.
Let the color of the backbone vertices
in $B_{i_{j-1}+1,i_j}$ be $c_j$ (where $i_{j-1}=0$ if $j=1$),
for $1\le j\le k$. This implies that the
subproblems $SP(1,i_1,\ast,c_1,c_2)$,
$SP(i_1+1,i_2,c_1,c_2,c_3)$, \ldots,
$SP(i_{k-1}+1,i_k,c_{k-1},c_k,\ast)$
are all valid and constitute an
$s$-$t$-path in the DAG $H$. Therefore,
the algorithm will output a proper list 3-coloring.
\end{proof}

As comb-convex bipartite graphs are a subclass of caterpillar-convex
bipartite graphs, we obtain:

\begin{corollary}\label{cor:licol-cat}
   {\sc Li $3$-col} can be solved in polynomial time for comb-convex bipartite graphs.
\end{corollary}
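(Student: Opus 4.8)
The plan is to observe that the statement is an immediate consequence of Theorem~\ref{th:licol-cat} together with the fact that every comb-convex bipartite graph is also caterpillar-convex. First I would recall the relevant definitions: a comb is a tree consisting of a backbone path with \emph{exactly one} leaf attached to each backbone vertex, whereas a caterpillar is a tree consisting of a backbone path with an arbitrary number (possibly zero) of leaves attached to each backbone vertex. Hence every comb is in particular a caterpillar, and if $T=(X,F)$ is a comb representation of a bipartite graph $G=(X\cup Y,E)$ --- i.e.\ a comb on $X$ such that $N_G(y)$ induces a subtree of $T$ for every $y\in Y$ --- then $T$ is also a caterpillar representation of~$G$. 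Consequently, the class of comb-convex bipartite graphs is contained in the class of caterpillar-convex bipartite graphs.

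Given an instance of \licol\ on a comb-convex bipartite graph, I would then proceed in two steps. If a comb representation $T=(X,F)$ is supplied with the input, we may pass it directly to Algorithm~\ref{algo1}, treating it as a caterpillar representation. If no representation is given, we invoke the recognition algorithm of Section~\ref{sec:recognition}, which runs in polynomial time and, since $G$ is comb-convex and hence caterpillar-convex, returns some caterpillar representation of~$G$ (not necessarily a comb representation, but this is irrelevant for what follows). In either case we are left with a caterpillar-convex bipartite graph, a caterpillar representation of it, and a $3$-list assignment~$\calL$, so Theorem~\ref{th:licol-cat} applies and decides the instance (and produces a proper list $3$-coloring if one exists) in polynomial time.

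There is essentially no obstacle to overcome here, since all the real work has already been done in Lemma~\ref{lem:SP}, Definition~\ref{def:DAG}, and Theorem~\ref{th:licol-cat}; the corollary is a pure containment argument. The only point meriting even a one-line check is that the polynomial running-time bound of Theorem~\ref{th:licol-cat} --- namely $O(n^2|G|)$ in the number $n$ of backbone vertices and the size $|G|$ of~$G$, plus the polynomial cost of computing a representation when none is given --- stays polynomial in the original input size when specialized to the comb-convex setting; this is immediate, as the parameters $n$ and $|G|$ are unchanged by regarding a comb as a caterpillar.
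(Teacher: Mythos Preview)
Your proposal is correct and matches the paper's approach: the corollary is stated immediately after Theorem~\ref{th:licol-cat} with only the one-line justification that comb-convex bipartite graphs form a subclass of caterpillar-convex bipartite graphs, which is exactly the containment argument you give. Your write-up is more detailed than the paper's (which offers no separate proof), but the underlying reasoning is identical.
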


Combining Corollary \ref{cor:licol-cat}
with Theorem~4 in \cite{brettell_arxiv} and the polynomial-time solvability of {\sc Li $k$-col} for $k\leq 2$ \cite{erdos,vizing} yields a complexity dichotomy:
{\sc Li $k$-col} is polynomial-time solvable on comb-convex bipartite graphs when $k\leq 3$; otherwise, it is NP-complete.

\section{Recognition of Caterpillar-Convex Bipartite Graphs}
\label{sec:section4}\label{sec:recognition}

We give a polynomial-time recognition algorithm for caterpillar-convex bipartite graphs.
We are given a bipartite graph $G=(X\cup Y,E)$ and want to decide whether it is caterpillar-convex and, if so, construct a caterpillar representation $T=(X,F)$.
First, we assume that a specific partition
of the vertex set into independent sets $X$ and $Y$ is given as
part of the input, and we want to decide whether there is
a caterpillar representation $T=(X,F)$ with respect to that given
bipartition (i.e., the vertex set of the caterpillar is the independent
set $X$ that was specified in the input). At the end of this section, we will discuss
how to handle the case that the bipartite graph is given without a specific
bipartition of the vertex set and we want to decide whether the
vertex set can be partitioned into independent sets $X$ and $Y$
in such a way that there is a caterpillar representation with
respect to that bipartition.

The main idea of the algorithm for recognizing caterpillar-convex bipartite graphs
is to construct an auxiliary DAG $D$ on vertex set $X$ in such a way that the sinks in $D$ can be used as the backbone vertices of~$T$.
To make this work, it turns out that we first need to remove
some vertices from $G$ that have no effect on whether $G$ is
caterpillar-convex.
First, we show that we can
remove
isolated vertices from $X$ and vertices of degree
0 or 1 from $Y$.

\begin{lemma}
\label{lem:removex0}
Let $x\in X$ be a vertex with degree $0$, and let
$G'$ be the graph obtained from $G$ by removing~$x$.
Then $G'$ is caterpillar-convex
if and only if $G$ is caterpillar-convex.
Furthermore,
a caterpillar representation of $G$ can be constructed
from a caterpillar representation of $G'$ by adding $x$
in a suitable location.
\end{lemma}

\begin{proof}
Let $T$ be a caterpillar representation of $G$.
If $x$ is a leaf in $T$, we obtain a caterpillar
representation $T'$ of $G'$ simply by removing~$x$ from~$T$.
If $x$ is a backbone vertex in $T$ with at least one leaf
$\ell$ attached to it, we obtain $T'$ by replacing $x$
in the backbone with $\ell$. If $x$ is a backbone vertex
in $T$ without leaves attached to it, we obtain $T'$ by
removing $x$ and making the two former backbone neighbors of
$x$ adjacent (if $x$ had two backbone neighbors).

For the other direction, let $T'$ be a caterpillar
representation of $G'$. We can obtain a caterpillar
representation $T$ of $G$ from $T'$ by adding $x$ as a backbone
vertex to one end of the backbone of $T'$.
\end{proof}

\begin{lemma}
\label{lem:removey01}
Let $y\in Y$ be a vertex with degree $0$ or~$1$, and let
$G'$ be the graph obtained from $G$ by removing~$y$.
Then $G'$ is caterpillar-convex
if and only if $G$ is caterpillar-convex.
Any caterpillar representation of $G'$ is also
a caterpillar representation of~$G$.
\end{lemma}

\begin{proof}
It is clear that any caterpillar representation $T$
of $G$ is also a caterpillar representation of~$G'$.
For the other direction,
let $T'$ be a caterpillar representation of $G'$.
The neighborhood of $y$ induces
an empty subtree or a single-vertex subtree in~$T'$,
so $T'$ is also a caterpillar representation of~$G$.
\end{proof}

We call a pair of vertices $x_i$ and $x_j$ \emph{twins} if $N_G(x_i)=N_G(x_j)$.
The twin relation on $X$ partitions $X$ into equivalence classes,
such that $x_1,x_2\in X$ are twins if and only if they are in the same class.
We say that two twins $x,x'$ are \emph{special twins}
if $\{x,x'\}$ is an equivalence class of the twin relation
on~$X$ and if there is $y\in Y$ with $N_G(y)=\{x,x'\}$.
Now, we show that removing
a twin from $X$ (with some
additional modification in the case of special twins)
has no effect on whether the graph is caterpillar-convex or not.

\begin{lemma}
\label{lem:removetwin}
Let $x,x'\in X$ be twins of non-zero degree, and let $G'=(X'\cup Y',E')$ be the graph obtained
from $G$ by deleting $x$.
If $x,x'$ are special twins in $G$,
then modify $G'$ by adding a new vertex $\bar{x}$
to $X'$, a new vertex $\bar{y}$ to $Y'$, and the edges $\{x',\bar{y}\}$
and $\{\bar{x},\bar{y}\}$ to $E'$. 
Then $G$ is caterpillar-convex
if and only if $G'$ is caterpillar-convex.
Furthermore,
a caterpillar representation of $G$ can be constructed
from a caterpillar representation of $G'$ by adding $x$
in a suitable location (and removing $\bar{x}$ if it
has been added to~$G'$).
\end{lemma}

\begin{proof}
First, consider the case that $x$ and $x'$ are not
special twins.
Assume that $G$ is caterpillar-convex, and
let $T$ be a caterpillar representation.
If at least one of $x$ and $x'$ is a leaf in $T$,
we can assume without loss of generality that $x$ is
a leaf (because the graph obtained from $G$ by deleting
$x$ and the graph obtained by deleting $x'$ are isomorphic,
as $x$ and $x'$ are twins). In that case, removing $x$
from $T$ yields a caterpillar $T'$ that is a caterpillar
representation of $G'$. Now assume that both $x$ and
$x'$ are backbone vertices in~$T$. Form a caterpillar
$T'$ by attaching the leaves in $L(x)$ (where
$L(x)$ denotes the set of leaves attached to backbone
vertex $x$ in caterpillar~$T$) as leaves to~$x'$,
removing $x$, and adding an edge between the two previous
backbone neighbors of $x$ (unless $x$ was an end vertex
of the backbone path). It is easy to see that $T'$ is a
caterpillar representation of $G'$. Hence, in both
cases it follows that $G'$ is caterpillar-convex.

For the other direction, assume that $G'$ is caterpillar-convex,
with caterpillar representation~$T'$. If $x'$ is a backbone
vertex in $T'$, we attach $x$ as leaf vertex to $x'$ in $T'$ to obtain
a caterpillar representation $T$ of~$G$.
If $x'$ is a leaf in $T'$ but $G'$ contains a twin $x''$ of $x'$ that is a backbone
vertex in $T'$, we attach $x$ as leaf vertex to $x''$ in $T'$ to obtain
a caterpillar representation $T$ of~$G$.
It remains to handle the case that $x'$ and all its twins (if any)
in $G'$ are leaf vertices in~$T'$.
If $x'$ has a twin $x''$ in $G'$, this means that
there is no $y\in Y'$ with $N_{G'}(y)=C$, where $C$
is the equivalence class of $x'$ in $X'$; otherwise,
$N_{G'}(y)$ would not be connected in~$T'$.
Therefore, there is also no $y\in Y$ with $N_G(y)=C\cup \{x\}$.
Thus, if we denote by $b$ the backbone vertex in $T'$
to which $x'$ is attached, we must have
$b\in N_{G'}(y)$ for every $y\in Y'$
with $C \subseteq N_{G'}(y)$, and there must
be at least one such $y$ as $x'$ has non-zero degree.
This implies $b\in N_G(y)$ for every $y\in Y$
with $C\cup \{x\}\subseteq N_G(y)$.
Then we can attach $x$ as leaf vertex to that backbone vertex
$b$ in $T'$ to obtain
a caterpillar representation $T$ of $G$.
If $x'$ does not have a twin in $G'$,
we know that $\{x,x'\}$ is an
equivalence class in $X$ and there is no
$y\in Y$ with $N_G(y)=\{x,x'\}$ (otherwise,
$x$ and $x'$ would be special twins).
Thus, if we denote by $b$ the backbone vertex in $T'$
to which $x'$ is attached, we must have
$b\in N_{G'}(y)$ for every $y\in Y'$
with $x'\in N_{G'}(y)$.
This implies $b\in N_G(y)$ for every $y\in Y$
with $\{x,x'\}\subseteq N_G(y)$.
Then we can attach $x$ as leaf vertex to that backbone vertex
$b$ in $T'$ to obtain
a caterpillar representation $T$ of $G$.
Hence, it follows
that $G$ is caterpillar-convex.

Now, we deal with the case that $x$ and $x'$ are special twins.
First, assume that $G$ is caterpillar-convex,
with caterpillar representation~$T$.
As there is $y\in Y$ with $N_G(y)=\{x,x'\}$,
it is not possible that both $x$ and $x'$ are
leaves in~$T$. Without loss of generality,
assume that $x'$ is a backbone vertex.
To obtain $T'$ from $T$, proceed as follows.
First, if $x$ is a leaf, remove $x$ from $T$, and if $x$ is
a backbone vertex,
attach all the leaves in $L(x)$ to $x'$,
remove $x$, and make the previous backbone
neighbors of $x$ adjacent to each other
(only in case $x$ was not an end vertex
of the backbone). Then,
add $\bar{x}$ as a leaf attached to~$x'$.
Observe that $N_{G'}(\bar{y})=\{x',\bar{x}\}$ induces
a connected subgraph of~$T'$.
Therefore, $T'$ is a caterpillar representation
of $G'$, and so $G'$ is caterpillar-convex.

For the other direction, assume that $G'$ is caterpillar-convex,
with caterpillar representation~$T'$.
If $x'$ is a leaf vertex in $T'$, then it must be
attached to the backbone vertex $\bar{x}$ (as $N_{G'}(\bar{y})=\{x',\bar{x}\}$).
Furthermore, the only vertex in $Y'$ that is adjacent to $\bar{x}$
is $\bar{y}$. Therefore, we can swap the positions of $x'$ and
$\bar{x}$ in $T'$, and the resulting tree is still a caterpillar representation
of~$G'$. Hence, we can assume that $x'$ is a backbone vertex in $T'$.
To obtain a caterpillar representation $T$ of~$G$, we add
$x$ as a leaf attached to~$x'$ to~$T'$, and we remove $\bar{x}$: If
$\bar{x}$ is a leaf vertex, we simply remove it, and if $\bar{x}$
is a backbone vertex, we attach all the leaves in $L(\bar{x})$
to an arbitrary other backbone vertex (for example, to~$x'$),
remove $\bar{x}$, and make the two previous backbone neighbors
of $\bar{x}$ adjacent to each other (unless $\bar{x}$ was an
end vertex of the backbone). The latter operation is correct
as the only vertex in $Y'$ that is adjacent to $\bar{x}$ is~$\bar{y}$.
The resulting tree is a caterpillar representation of~$G$.
\end{proof}

We remark that the special treatment of special twins in Lemma~\ref{lem:removetwin}
is necessary because there is a graph $G=(X\cup Y,E)$ with special twins that does not
have a caterpillar representation $T=(X,F)$, while simply removing one of the two special twins (without
adding the extra vertices $\bar{x}$ and $\bar{y}$) would produce
a graph $G'=(X'\cup Y',E')$ that has a caterpillar representation $T'=(X',F')$.
An example of such a graph
is the graph with $X=\{a,b,c,f,g,x,x'\}$ where the neighborhoods
of the vertices in $Y$ are $\{a,f\}, \{a,b\}, \{b,x,x'\}, \{x,x'\}, \{b,c\}, \{c,g\}$.
Here, the vertices $x$ and $x'$ are special twins, and the graph obtained
after removing $x$ has the caterpillar representation with backbone
path $abc$ and leaf~$f$ attached to~$a$, leaf $x'$ attached to~$b$, and leaf
$g$ attached to~$c$.

Let $G_1=(X_1\cup Y_1,E_1)$ be the graph obtained from $G=(X\cup Y,E)$ by
removing vertices of degree $0$ from~$X$, vertices
of degree $0$ or $1$ from~$Y$,
and twins from $X$ (with the extra modification detailed in Lemma~\ref{lem:removetwin}
in case of special twins) as long as such vertices exist.
Lemmas~\ref{lem:removex0}--\ref{lem:removetwin}
imply:

\begin{corollary}
\label{cor:removevertices}%
$G_1$ is caterpillar-convex if and only if $G$ is caterpillar-convex.
\end{corollary}

We now define a directed graph $D=(X_1,A)$
based on $G_1$: For every pair of distinct vertices $x,x'\in X_1$,
we let $D$~contain the arc $(x,x')$ if and only if $N_{G_1}(x)\subseteq N_{G_1}(x')$, i.e.,
we add the arc $(x,x')$ if and only if every vertex in $y$ that is adjacent to $x$ in $G_1$ is also adjacent to $x'$ in~$G_1$. Note that
$D$ is transitive: If it contains two arcs $(x,x')$ and $(x',x'')$,
it must also contain $(x,x'')$.

\begin{lemma}
\label{lem:acyclic}%
    $D$ is a directed acyclic graph.
\end{lemma}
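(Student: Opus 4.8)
The plan is to combine the transitivity of $D$ (observed immediately before the statement) with the fact that the reductions $G \rightsquigarrow G_1 \rightsquigarrow G_2$ have eliminated all twins. First I would reduce the absence of directed cycles to the absence of a directed $2$-cycle: in a transitive digraph with no loops (and $D$ has no loops, since arcs only join distinct vertices), any directed cycle produces a pair of opposite arcs. Concretely, take a shortest directed cycle $v_1 \to v_2 \to \cdots \to v_m \to v_1$; its vertices are pairwise distinct and, by loop-freeness, $m \ge 2$. Applying transitivity repeatedly along the subpath $v_1 \to v_2 \to \cdots \to v_m$ yields the arc $(v_1, v_m)$, while $(v_m, v_1)$ already belongs to the cycle, so $D$ contains both $(v_1,v_m)$ and $(v_m,v_1)$.

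Next I would translate such a $2$-cycle back into the graph $G_2$. By the definition of $D$, the arcs $(x,x')$ and $(x',x)$ mean $N_{G_2}(x) \subseteq N_{G_2}(x')$ and $N_{G_2}(x') \subseteq N_{G_2}(x)$, hence $N_{G_2}(x) = N_{G_2}(x')$ with $x \neq x'$; that is, $x$ and $x'$ are twins in $G_2$. It then remains to show that $X_2$ contains no two vertices with the same $G_2$-neighbourhood, which contradicts the previous step and finishes the proof. For this I would trace through the construction: by the choice of $\Tdel$, no two vertices of $X_1$ have equal neighbourhoods in $G_1$; deleting from $X_1$ a vertex all of whose $Y$-neighbours are pendant (Lemma~\ref{lem:removependant}, applied as in Corollary~\ref{cor:removependants}) leaves the neighbourhood of every remaining vertex of $X$ unchanged, so cannot create twins; and the final deletion of the degenerate ($\deg \le 1$) vertices of $Y$ must be argued not to identify the neighbourhoods of two surviving vertices of $X_2$, using that each vertex kept in $X_2$ retains an appropriate neighbour in $Y_2$. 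Combining this with the first two steps, $D$ has no $2$-cycle and hence no directed cycle, i.e., $D$ is a DAG.

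The step I expect to be the main obstacle is the last one: verifying that none of the reduction operations — in particular the removal of the low-degree $Y$-vertices — can collapse two distinct neighbourhoods into one among the vertices that remain in $X_2$. This is the only place where the precise order and scope of the reduction steps matter, and it is what makes the twin-removal preprocessing essential; the rest (transitivity, the cycle-to-$2$-cycle argument, and the $2$-cycle-to-twins translation) is routine.
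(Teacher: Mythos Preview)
Your overall strategy matches the paper's: a directed cycle in $D$ forces equal $G_2$-neighbourhoods among the cycle vertices, hence twins in $X_2$, contradicting the twin-removal preprocessing. The paper carries out the chain-of-inclusions argument directly on an arbitrary cycle rather than first reducing to a $2$-cycle via transitivity, but that difference is cosmetic.

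You are right that the only substantive step is showing that $X_2$ contains no twins, and you correctly flag the deletion of degree-$\le 1$ vertices from $Y$ as the dangerous point. However, the step you propose cannot be completed as stated: that deletion \emph{can} merge two previously distinct neighbourhoods. Take $X_1=\{x,x'\}$, $Y=\{y_1,y_2\}$ with edges $xy_1,\,xy_2,\,x'y_1$. There are no twins in $X_1$; neither $x$ nor $x'$ has only pendant neighbours (both are adjacent to $y_1$, which has degree~$2$), so nothing is removed from $X_1$ and $X_2=X_1$; deleting the pendant vertex $y_2$ then gives $N_{G_2}(x)=N_{G_2}(x')=\{y_1\}$, and $D$ contains the $2$-cycle $x\to x'\to x$. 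Your suggested justification---that each surviving $x\in X_2$ retains a neighbour in $Y_2$---holds in this example and yet does not prevent the collapse. (The paper's own proof simply asserts ``there are no twins in $X_2$'' and therefore has the same gap; one natural repair is to run a second twin-removal pass after forming $G_2$.)
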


\begin{proof}
Assume there is a cycle on vertices $x_i, x_{i+1}, \ldots, x_j$ in $D$. Then, $N(x_i)\subseteq N(x_{i+1}) \subseteq \cdots \subseteq N(x_j)\subseteq N(x_i)$. Thus $N(x_i)=N(x_{i+1})= \cdots = N(x_j)$, and so $x_i, x_{i+1}, \ldots, x_j$ are twins, a contradiction because there are no twins in $X_1$.
Thus $D$ is acyclic.
\end{proof}

\begin{lemma}
\label{lem:nobbarc}%
If $G_1=(X_1\cup Y_1, E_1)$ is caterpillar-convex, there is a caterpillar representation $T_1=(X_1, F)$ in which no two backbone vertices are connected by an arc in~$D$.
\end{lemma}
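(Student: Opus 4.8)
The plan is to start from an arbitrary caterpillar representation $T_2=(X_2,F)$ of $G_2$ and modify it, removing one "bad" backbone-to-backbone arc of $D$ at a time, until none remain; a potential-function argument will show the process terminates. So suppose $b$ and $b'$ are both backbone vertices of $T_2$ and $D$ contains the arc $(b,b')$, i.e.\ $N_{G_2}(b)\subseteq N_{G_2}(b')$. Since $G_2$ has no twins, the inclusion is strict, so $N_{G_2}(b')\setminus N_{G_2}(b)\neq\emptyset$. The key structural observation to exploit is that every $y\in Y_2$ has $N_{G_2}(y)$ inducing a subtree (a connected subgraph) of the caterpillar $T_2$; in particular, if $y$ is adjacent to $b$, then the unique backbone path between $b$ and $b'$ must lie entirely in $N_{G_2}(y)$ would be too strong, but at least the subtree $N_{G_2}(y)$ restricted to the backbone is a subpath.

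First I would handle the case where $b$ and $b'$ are adjacent on the backbone. Then I claim we can simply contract: make $b$ a leaf attached to $b'$ (reattaching $b$'s leaves to $b'$, and splicing the backbone path around $b$). Any $y$ adjacent to $b$ is adjacent to $b'$ as well, so the only concern is connectivity of $N_{G_2}(y)$ for $y$ adjacent to $b$ but, crucially, such $y$ are adjacent to $b'$ too; after the contraction $b$ hangs directly off $b'$, so $N_{G_2}(y)$ stays connected (the old backbone neighbourhood of $y$ was an interval containing $b$, hence containing $b'$ since they were consecutive, and remains an interval). For the general (non-adjacent) case, I would use transitivity of $D$: walk along the backbone from $b$ toward $b'$. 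Each consecutive backbone vertex $c$ strictly between them satisfies $N_{G_2}(b)\subseteq N_{G_2}(c)$? This is the delicate point — it need not hold. Instead, the cleaner route is: among all backbone vertices $c$ with $N_{G_2}(b)\subseteq N_{G_2}(c)$, the arc $(b,c)$ is in $D$; pick such a $c$ that is a backbone neighbour of $b$ if one exists, and otherwise argue that we may first move $b$ to be a leaf. Concretely: because $N_{G_2}(b)\subseteq N_{G_2}(b')$ and each $y\in N_{G_2}(b)$ has $N_{G_2}(y)$ an interval on the backbone containing both $b$ and $b'$, every such $y$ is adjacent to the entire backbone segment between $b$ and $b'$; in particular $y$ is adjacent to the backbone neighbour $b''$ of $b$ on the side toward $b'$. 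Hence $N_{G_2}(b)\subseteq N_{G_2}(b'')$, so $(b,b'')\in D$ with $b,b''$ adjacent — reducing to the first case.

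After removing $b$ by the contraction above, the number of backbone vertices strictly decreases, so define the potential $\Phi(T_2)$ to be the number of backbone vertices; each step strictly decreases $\Phi$, and $\Phi\ge 1$ always, so the process halts at a caterpillar representation with no backbone-to-backbone arc of $D$. Throughout, one must check each modified $T_2$ is still a valid caterpillar representation of $G_2$: it is still a caterpillar (a path plus pendant leaves), and for every $y\in Y_2$ the set $N_{G_2}(y)$ still induces a subtree — this follows because the backbone trace of $N_{G_2}(y)$ was an interval before the contraction and, since the contracted vertex $b$ lay adjacent to $b''$ and $y$'s interval either (i) did not contain $b$, in which case it is unchanged as an interval, or (ii) contained $b$ and hence $b''$, in which case after contraction $b$ becomes a leaf of $b''$ still in $N_{G_2}(y)$ and the backbone trace remains an interval. \textbf{The main obstacle} I anticipate is precisely this invariant-maintenance bookkeeping in case (ii): verifying that relocating $b$'s pendant leaves and re-splicing the backbone never disconnects some $N_{G_2}(y)$, especially for $y$ adjacent to $b$ and to one of $b$'s leaves but to no other backbone vertex. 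Such a $y$ would have had $N_{G_2}(y)\subseteq\{b\}\cup L(b)$, which contradicts the assumption $N_{G_2}(b)\subseteq N_{G_2}(b')$ only if $y$ is adjacent to $b$ — and indeed then $y\in N_{G_2}(b')$, forcing $y$ adjacent to $b'\neq b$, a contradiction; so after the twin- and pendant-removal preprocessing this degenerate configuration cannot occur, which is exactly why those preprocessing steps (Corollaries~\ref{cor:removetwins} and~\ref{cor:removependants}) were needed.
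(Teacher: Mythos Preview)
Your proposal is correct and follows essentially the same approach as the paper: reduce to the case where $b$ and $b'$ are adjacent on the backbone (via the observation that every $y\in N_{G_2}(b)$ is adjacent to the whole backbone segment from $b$ to $b'$), then contract $b$ and its leaves onto $b'$ and splice the backbone, verifying the subtree property for each $y$ by the same two-case analysis, and iterate with the backbone length as a strictly decreasing potential. The paper presents the verification slightly differently (splitting on whether $N(y)$ meets $\{x_i,x_j\}\cup L(x_i)\cup L(x_j)$ and explicitly noting that each leaf $\ell\in L(b)$ has an arc to $b$ in $D$ because all its $Y_2$-neighbours have degree $\ge 2$), but the content is the same; your closing remark about the role of the preprocessing is slightly misplaced---what you actually use is that $Y_2$ contains no degree-$1$ vertices, ensuring any $y$ adjacent to a leaf of $b$ is also adjacent to $b$ itself.
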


\begin{proof}
Let $T_1=(X_1, F)$ be a caterpillar representation for $G_1$ in which there are two backbone vertices $x_i$ and $x_j$ that are connected
by an arc $(x_i,x_j)$ in $D$. If $x_i$ and $x_j$
are not adjacent on the backbone path $P_1$ of $T_1$, observe that
$x_i$ also has an arc to every vertex between $x_i$ and
$x_j$ on $P_1$. This is because, in $G_1$, each neighbor of $x_i$ is also
adjacent to $x_j$ and hence to all vertices between $x_i$ and
$x_j$ on $P_1$. Thus, we can choose $x_i$ and $x_j$ to be
adjacent backbone vertices that have an arc $(x_i,x_j)$ in~$D$.

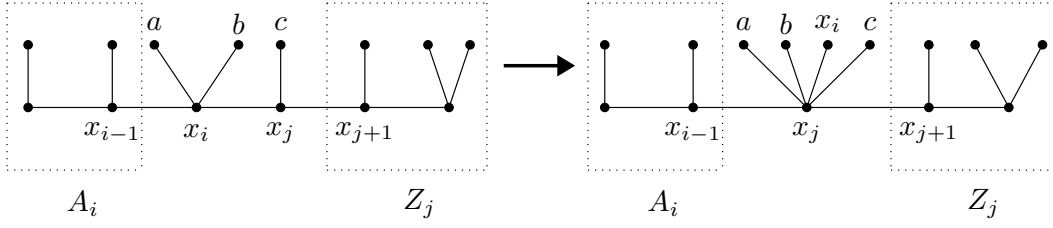
\begin{figure}[t] 
\begin{center}
\resizebox{0.99\textwidth}{!}{%
\begin{tikzpicture}[dot/.style={draw,circle,minimum size=1mm,inner sep=0pt,outer sep=0pt}]

\draw[dotted] (-0.5,4.25) -- (1.1,4.25);
\draw[dotted] (-0.5,4.25) -- (-0.5,2.25);
\draw[dotted] (1.1,4.25) -- (1.1,2.25);
\draw[dotted] (-0.5,2.25) -- (1.1,2.25);

\draw (0.4,1.85) node {$A_i$};

\draw[dotted] (3.3,4.25) -- (5.2,4.25);
\draw[dotted] (3.3,4.25) -- (3.3,2.25);
\draw[dotted] (5.2,4.25) -- (5.2,2.25);
\draw[dotted] (3.3,2.25) -- (5.2,2.25);

\draw (4.4,1.85) node {$Z_j$};

\draw[dotted] (6.4,4.25) -- (8,4.25);
\draw[dotted] (6.4,4.25) -- (6.4,2.25);
\draw[dotted] (8,4.25)   -- (8,2.25);
\draw[dotted] (6.4,2.25) -- (8,2.25);

\draw (7.3,1.85) node {$A_i$};

\draw[dotted] (10,4.25) -- (12,4.25);
\draw[dotted] (10,4.25) -- (10,2.25);
\draw[dotted] (12,4.25)  -- (12,2.25);
\draw[dotted] (10,2.25) -- (12,2.25);

\draw (11.1,1.85) node {$Z_j$};

\draw[-Triangle, very thick](5.4,3.5) -- (6.25,3.5);

\draw (1.25,4) node {$a$};
\draw (2.25,4) node {$b$};
\draw (2.75,4) node {$c$};

\draw (0.75,2.7) node {$x_{i-1}$};
\draw (1.75,2.7) node {$x_i$};
\draw (2.75,2.7) node {$x_j$};
\draw (3.75,2.7) node {$x_{j+1}$};

\draw (8.25,4) node {$a$};
\draw (8.75,4) node {$b$};
\draw (9.25,4) node {$x_i$};
\draw (9.75,4) node {$c$};

\draw (7.65,2.7) node {$x_{i-1}$};
\draw (9,2.7) node {$x_j$};
\draw (10.45,2.7) node {$x_{j+1}$};

\coordinate [dot,fill = black] (X1) at (-0.25,3);
\coordinate [dot,fill = black] (X2) at (0.75,3);
\coordinate [dot,fill = black] (X3) at (1.75,3);
\coordinate  [dot,fill = black](X4) at (2.75,3);
\coordinate  [dot,fill = black](X5) at (3.75,3);
\coordinate  [dot,fill = black](X6) at (4.75,3);

\coordinate  [dot,fill = black](X8) at (6.6,3);
\coordinate  [dot,fill = black](X9) at (7.65,3);
\coordinate  [dot,fill = black](X10) at (9,3);
\coordinate  [dot,fill = black](X11) at (10.45,3);
\coordinate  [dot,fill = black](X12) at (11.4,3);

\coordinate[dot,fill = black] (Y1) at (-0.25,3.75);
\coordinate [dot,fill = black] (Y2) at (0.75,3.75);
\coordinate [dot,fill = black] (Y3) at (1.25,3.75);
\coordinate [dot,fill = black] (Y4) at (2.25,3.75);
\coordinate [dot,fill = black] (Y5) at (2.75,3.75);
\coordinate [dot,fill = black] (Y6) at (3.75,3.75);
\coordinate [dot,fill = black] (Y7) at (4.5,3.75);
\coordinate [dot,fill = black] (Y8) at (5,3.75);

\coordinate [dot,fill = black] (Y9) at (6.6,3.75);
\coordinate [dot,fill = black] (Y10) at (7.65,3.75);
\coordinate [dot,fill = black] (Y11) at (8.25,3.75);
\coordinate [dot,fill = black] (Y12) at (8.75,3.75);
\coordinate [dot,fill = black] (Y13) at (9.25,3.75);
\coordinate [dot,fill = black] (Y14) at (9.75,3.75);
\coordinate [dot,fill = black] (Y15) at (10.45,3.75);
\coordinate [dot,fill = black] (Y16) at (11,3.75);
\coordinate [dot,fill = black] (Y17) at (11.8,3.75);

\draw [black] (X1) -- (Y1);
\draw [black] (X2) -- (Y2);
\draw [black] (X3) -- (Y3);
\draw [black] (X3) -- (Y4);
\draw [black] (X4) -- (Y5);
\draw [black] (X5) -- (Y6);
\draw [black] (X6) -- (Y7);
\draw [black] (X6) -- (Y8);
\draw [black] (X1) -- (X2);
\draw [black] (X2) -- (X3);
\draw [black] (X3) -- (X4);
\draw [black] (X4) -- (X5);
\draw [black] (X5) -- (X6);

\draw [black] (X8) -- (Y9);
\draw [black] (X9) -- (Y10);
\draw [black] (X10) -- (Y11);
\draw [black] (X10) -- (Y12);
\draw [black] (X10) -- (Y13);
\draw [black] (X10) -- (Y14);
\draw [black] (X11) -- (Y15);
\draw [black] (X12) -- (Y16);
\draw [black] (X12) -- (Y17);

\draw [black] (X8) -- (X9);
\draw [black] (X9) -- (X10);
\draw [black] (X10) -- (X11);
\draw [black] (X11) -- (X12);

\end{tikzpicture}}
\end{center}
 \caption{Caterpillar transformation from $T_1$ (left) to $T_1'$ (right) used in the proof of Lemma~\ref{lem:nobbarc}.} \label{fig:transformation}
\end{figure}%
Let $L(b)$ denote the set of leaf vertices
attached to a backbone vertex $b$ in $T_1$.
Observe that each leaf $\ell\in L(b)$ must have an arc to $b$
in~$D$. This is because each neighbor (in $G_1$) of $\ell$ has degree at
least~$2$ and is hence also adjacent to~$b$.
Therefore, every leaf in $L(x_i)$ has an arc to $x_i$ and,
by transitivity of $D$, also an arc to~$x_j$.
We now create a new caterpillar $T_1'$ from $T_1$ by
(1) attaching $x_i$ and all the leaves in $L(x_i)$ as leaves to $x_j$,
and (2) making the two previous backbone neighbors of $x_i$ adjacent (only
if $x_i$ was not an end vertex of the backbone path).
See Fig.~\ref{fig:transformation} for an illustration.

We will show in the remainder of this proof that $T_1'$ is
a caterpillar representation of~$G_1$. By applying the same
operation repeatedly as long as there exist two backbone vertices
that are connected by an arc in~$D$, the statement
of the lemma follows.

Assume the vertices on the backbone path of $T_1$
are $x_1,x_2,\ldots,x_i,x_j,\ldots,x_r$ with $j=i+1$.
Define $A_i=\bigcup_{k=1}^{i-1}(\{x_k\}\cup L(x_k))$
and $Z_j=\bigcup_{k=j+1}^r(\{x_k\}\cup L(k_k))$.
These are the parts of the backbone that are not
affected by the transformation from $T_1$ to $T_1'$.
Note that $A_i$ and/or $Z_j$ can also be empty.
We now prove for every $y\in Y$
that $N(y)$ induces a tree in~$T_1'$:

\noindent\textbf{Case 1:} $N(y) \cap (\{x_i, x_j\} \cup L(x_i) \cup L(x_j)) = \emptyset$.
Since $N(y)$ induces a tree in $T_1$, we must have $N(y)\subseteq A_i$
or $N(y)\subseteq Z_j$, and hence $N(y)$ also induces a tree in~$T_1'$. 

\noindent\textbf{Case 2:} $N(y) \cap (\{x_i, x_j\} \cup L(x_i) \cup L(x_j)) \neq \emptyset$. Note that $x_j\in N(y)$ as all the vertices
in $\{x_i\}\cup L(x_i)\cup L(x_j)$ have an arc to $x_j$
in $D$.
As $N(y)$ induces a tree in $T_1$,
we observe that
    $N(y) \cap A_i$ is either empty or contains $x_{i-1}$,
and that $N(y) \cap Z_j$ is either empty or contains $x_{j+1}$.
As the caterpillar part on $A_i$ and $Z_i$ has not changed
in the transformation from $T_1$ to $T_1'$, $N(y)\cap A_i$
is either empty or induces a tree containing $x_{i-1}$
in $T_1'$, and $N(y)\cap Z_j$ is either empty or induces
a tree containing $x_{j+1}$ in $T_1'$.
Furthermore, $N(y)\setminus(A_i\cup Z_j)$ contains
$x_j$ and some subset of the leaf neighbors of $x_j$
in $T_1'$ and hence induces a star containing~$x_j$ in~$T_1'$.
As $x_j$ is adjacent to $x_{i-1}$ and $x_{j+1}$ (if those
vertices exist),
$N(y)$~induces a tree
in~$T_1'$.
\end{proof}

\begin{lemma}
\label{lem:sinksbb}%
If $G_1=(X_1\cup Y_1, E_1)$ is caterpillar-convex, there is a caterpillar representation $T_1=(X_1, F)$ such that the set
of backbone vertices is exactly the set of sinks in~$D$.
\end{lemma}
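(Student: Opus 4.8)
The plan is to show that the very caterpillar representation $T_2$ produced by Lemma~\ref{lem:nobbarc}, in which no two backbone vertices are joined by an arc of $D$, already has the stronger property claimed here: its set of backbone vertices is precisely the set of sinks of $D$. So no new transformation is needed; it suffices to analyse this $T_2$ directly.

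First I would record the key observation (essentially already made inside the proof of Lemma~\ref{lem:nobbarc}): if $\ell$ is a leaf of $T_2$ attached to backbone vertex $b$, then $D$ contains the arc $(\ell,b)$. Indeed, every $y\in N_{G_2}(\ell)$ has degree at least $2$ in $G_2$, since all vertices of degree $0$ or $1$ were deleted from $Y$ when forming $G_2$; hence the subtree of $T_2$ induced by $N_{G_2}(y)$ contains the leaf $\ell$ together with at least one further vertex, and therefore must contain $\ell$'s unique neighbour $b$ in $T_2$. Thus $N_{G_2}(\ell)\subseteq N_{G_2}(b)$, i.e.\ $(\ell,b)\in A$.

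From this, one direction is immediate: every sink of $D$ is a backbone vertex of $T_2$, because a leaf $\ell$ has the outgoing arc $(\ell,b)$ and so cannot be a sink. For the converse, suppose for contradiction that some backbone vertex $x_i$ of $T_2$ is not a sink of $D$; then there is an arc $(x_i,x')$ with $x'\neq x_i$. By Lemma~\ref{lem:nobbarc}, $x'$ cannot be another backbone vertex, so $x'$ is a leaf of $T_2$, attached to some backbone vertex $b$. By the observation above, $(x',b)\in A$, and since $D$ is transitive this yields $(x_i,b)\in A$. If $b\neq x_i$, then $x_i$ and $b$ are two distinct backbone vertices joined by an arc of $D$, contradicting Lemma~\ref{lem:nobbarc}. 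If $b=x_i$, then $x'$ is a leaf attached to $x_i$ and $D$ contains both $(x_i,x')$ and $(x',x_i)$, forcing $N_{G_2}(x_i)=N_{G_2}(x')$; but then $x_i$ and $x'$ are twins, contradicting the fact that $X_2$ contains no twins (a fact already invoked in Lemma~\ref{lem:acyclic}). Hence every backbone vertex of $T_2$ is a sink, and combining the two directions completes the proof.

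The argument is short, and the only genuine subtlety is the degenerate case $b=x_i$ in the converse direction: this is exactly the point where both the earlier removal of twins (ensuring $X_2$ has none) and the transitivity of $D$ are indispensable, so I would make sure to treat that case explicitly rather than gloss over it.
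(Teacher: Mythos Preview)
Your proof is correct and follows essentially the same approach as the paper: start from the representation $T_2$ supplied by Lemma~\ref{lem:nobbarc}, use the leaf-to-backbone arc observation to show every sink is a backbone vertex, and for the converse derive from an arc $(x_i,x')$ with $x'$ a leaf an arc from $x_i$ to the backbone vertex carrying~$x'$. The only cosmetic differences are that the paper obtains this last arc by a direct subtree argument rather than via transitivity of $D$, and handles the degenerate case $b=x_i$ by invoking acyclicity of $D$ (Lemma~\ref{lem:acyclic}) instead of the no-twins property directly---but since Lemma~\ref{lem:acyclic} is itself proved from the absence of twins, the two arguments are equivalent.
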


\begin{proof}
By Lemma~\ref{lem:nobbarc}, there exists a caterpillar representation
$T_1$ of $G_1$ in which no two backbone vertices are connected
by an arc in~$D$. Furthermore, every leaf attached to a backbone
vertex (in $T_1$) has an arc (in $D$) to that backbone vertex (because
every $y\in Y$ has degree at least~$2$). A backbone
vertex cannot have an arc (in $D$) to a leaf attached to it (in $T_1$), as $D$ is acyclic (Lemma~\ref{lem:acyclic}). Finally,
a backbone vertex $b$ cannot have an arc (in $D$) to
a leaf vertex $\ell$ attached to a different backbone vertex $b'$
because that would imply that $b$ has an arc to $b'$
(since every vertex in $y$ that is adjacent to $b$ is also
adjacent to $\ell$ and hence, as $N(y)$ induces a tree in $T_1$,
also to $b'$). Therefore, the backbone vertices of $T_1$ are
exactly the sinks (vertices without outgoing edges) of~$D$.
\end{proof}

\begin{algorithm}[tbp]
\caption{Recognition Algorithm for Caterpillar-Convex Bipartite Graphs}
\label{algo2}
\begin{algorithmic}[1]
\Require A bipartite graph $G=(X \cup Y,E)$
\Ensure Either return a caterpillar representation $T=(X,F)$ of $G$, or decide that $G$ is not caterpillar-convex and return `fail'
\State Obtain $G_1=(X_1\cup Y_1,E_1)$ from $G$ by removing
vertices of degree~$0$ from $X$, vertices of degree $0$ or $1$ from $Y$,
and twins from $X$ (with the extra modification stated in Lemma~\ref{lem:removetwin}
in case of special twins),
as long as any such vertex exists (Lemmas~\ref{lem:removex0}--\ref{lem:removetwin})
\State Create a directed graph $D=(X_1, A)$ that
contains the arc $(x,x')$ if and only if
$N_{G_1}(x)\subseteq N_{G_1}(x')$
\State $B=$ the set of sinks in $D$, $L=$ all other vertices in $D$
\State Use an algorithm for consecutive ones \cite{hsu} to order $B$.
If the algorithm fails, return `fail'.
\State Form caterpillar $T_1$ by taking the ordered backbone $B$ and attaching each vertex in $L$ as leaf to an arbitrary vertex in $B$ to which it has an arc in $D$
\State Obtain $T$ from $T_1$ by adding the vertices that were deleted from $X$ in Step 1 (and removing vertices that have been added when
special twins were processed) (Lemmas~\ref{lem:removex0} and~\ref{lem:removetwin}).
\State Return $T$
\end{algorithmic}
\end{algorithm}

\begin{theorem}
\label{theorem:the3}\label{th:recogcat}%
Algorithm \ref{algo2} decides in polynomial time whether a given bipartite
graph $G=(X\cup Y,E)$ is caterpillar-convex and, if so, outputs a
caterpillar representation $T=(X,F)$.
\end{theorem}

\begin{proof}
Let $G=(X \cup  Y, E)$ be a bipartite graph. Let $n=|X\cup Y|$ and $m=|E|$.
First, the algorithm
removes vertices of degree $0$ from~$X$, vertices
of degree $0$ or $1$ from~$Y$,
and twins from $X$ (with the extra modification detailed in Lemma~\ref{lem:removetwin}
in case of special twins) as long as such vertices exist.
The resulting
graph is $G_1=(X_1\cup Y_1,E_1)$. By Corollary~\ref{cor:removevertices},
$G_1$ is caterpillar-convex if and
only if $G$ is caterpillar-convex.

We show that $G_1$ can be computed from $G$
in $O(n^2)$ time. If $G$ is given as adjacency matrix,
we can compute an adjacency list representation in $O(n^2)$
time. Then, in $O(n+m)$ time, we can compute the
degree of every vertex and make a list $L_0$ of vertices in $X$
with degree~$0$ and vertices in $Y$ of degree~$0$
or~$1$. As long as $L_0$ is non-empty, we remove
a  vertex $v$ from $L_0$, decrease the degree of its
neighbor (if any) by~1, and delete $v$ from~$G$.
If $v$ had a neighbor (which is only possible
if $v\in Y$) and that neighbor now has degree~$0$,
we add that neighbor (which must be in $X$) to~$L_0$.
This takes $O(n)$ time as each vertex removed from $L_0$
can be processed in $O(1)$ time and we remove at most $n$ vertices.
Let $X'$ and $Y'$ denote the vertices that have not yet been deleted
at this stage.
Next, we compute a partition of $X'$ into equivalence classes,
where $x$ and $x'$ are in the same equivalence class if and
only if they are twins in the current graph. This can be
done in $O(n^2)$ time.
For example, one can start with a partition $\mathcal{P}$
consisting of a single equivalence class equal to $X'$ and then, for each $y\in Y'$, refine $\mathcal{P}$ in $O(n)$ time so that each equivalence class
$C$ with $N(y)\cap C\neq \emptyset$ and $C\setminus N(y)\neq \emptyset$
gets split into $N(y)\cap C$ and $C\setminus N(y)$.
During that process, we can also determine for each
resulting equivalence class $C$ whether there is
a vertex $y\in Y'$ with $N(y)=C$, without exceeding
the time bound of $O(n^2)$.
Then, for each equivalence class $C$ of size at least $2$,
we remove all but one vertex in $C$ from $X'$ and update the
degrees of the neighbors of the deleted vertices accordingly;
the vertices in $Y'$ whose degree becomes $1$ in this process
are again added to the list $L_0$ and then deleted from the graph in the same
way as above.
Besides, if $x'$ is the vertex in $C$ that we
do not remove and if there is a $y\in Y'$ with $N(y)=C$,
we add new vertices $\bar{x},\bar{y}$ and edges $\{\bar{x},\bar{y}\}$
and $\{x',\bar{y}\}$ to the graph in order to implement the
treatment of special twins according to Lemma~\ref{lem:removetwin}.
We note that processing one equivalence class $C$
in this way does not alter the other equivalence classes.
In particular, it cannot happen that two other equivalence classes $C_1$
and $C_2$ get merged because the processing of $C$ leads
to the deletion of some vertices with degree~1 in $Y'$:
the vertices deleted from $Y'$ while processing $C$ are not
adjacent to any vertex outside $C$, and hence their deletion
has no effect on the twin relationship between vertices outside~$C$.
The time for processing one equivalence class $C$ can
be bounded by the sum of the degrees of the vertices in $C$,
and hence the time for processing all equivalence classes
is bounded by $O(m)\subseteq O(n^2)$.
Thus, $G_1$ can be obtained in $O(n^2)$ time.
Note that $G_1$ has $O(n)$ vertices as the number of new vertices
added to $G_1$ is bounded by $2s$, where $s\le \frac{n}{2}$ is the number of
equivalence classes $C$ with $|C|\ge 2$
for which a vertex $y\in Y'$ with $N(y)=C$ exists.

Next, the algorithm constructs the directed graph $D=(X_1,A)$ from $G_1=(X_1\cup Y_1, E_1)$ by adding an arc $(x_i,x_j)$ for $x_i,x_j\in X_1$ if $N_{G_1}(x_i)\subseteq N_{G_1}(x_j)$. For any two vertices $x_i,x_j\in X_1$
one can trivially check in $O(n)$ time whether $N(x_i)\subseteq N(x_j)$,
so $D$ can easily be constructed in $O(n^3)$ time. As
$|X'|=O(n)$ and $|A|=O(n^2)$,
the set $B$ of sinks and the
set $L$ of remaining vertices can be determined in $O(n^2)$ time
once $D$ has been constructed.

Once the set $B$ has been determined, we create a set system
$\mathcal{S}$ containing for every $y\in Y$ the set $N(y)\cap B$
and apply an algorithm for checking the consecutive ones
property~\cite{hsu} to check if $B$ can be ordered in such
a way that every set in $\mathcal{S}$ consists of consecutive
vertices. If so, the resulting order is used to determine
the order in which $B$ forms the backbone path. Otherwise,
$G_1$ (and hence $G$) cannot be caterpillar-convex (cf.~Lemma~\ref{lem:sinksbb}), and
the algorithm returns `fail'. As the input to the consecutive
ones algorithm can be represented as a matrix of size $O(n^2)$,
running Hsu's linear-time algorithm~\cite{hsu} on $\mathcal{S}$
takes $O(n^2)$ time.

Next, the algorithm attaches each vertex $\ell\in L$
as a leaf to an arbitrary vertex $b\in B$ to which it has
an arc in~$D$. It can be shown as follows that every $\ell\in L$
must indeed have at least one arc to a vertex in $B$:
As $D$ is acyclic, every vertex $\ell$ that is not a sink
must have a directed path leading to some sink~$b$, and
as $D$ is transitive, the arc $(\ell,b)$ must exist.
Attaching $\ell$ to $b$ yields a valid caterpillar
representation for the following reason:
As every neighbor $y$ of $\ell$ is
also adjacent to $b$, and as $N(y)\cap B$ is a contiguous
segment of $B$, it is clear that $N(y)$ induces a tree in
the resulting caterpillar~$T_1$. Hence, $T_1$ is a caterpillar
representation of~$G_1$. Attaching the vertices of $L$
as leaves to suitable vertices in $B$ can easily be done
in $O(n^2)$ time.

Finally, the vertices that have been deleted in the first
step are added back (and vertices that have been added when
special twins were processed are removed) in order to extend
the caterpillar $T_1$ to a caterpillar representation $T$ of $G$.
This can easily be done in $O(n)$ time per vertex following the
arguments used in the proofs of Lemmas~\ref{lem:removex0}
and~\ref{lem:removetwin}.
By Corollary~\ref{cor:removevertices}, $T$~is a caterpillar representation
of~$G$.

The running-time of the algorithm is dominated by
the time for constructing $D$, which we have bounded
by $O(n^3)$. We remark that we have not attempted to
optimize the running-time, as our main goal was to
show that caterpillar-convex bipartite graphs can
be recognized in polynomial time.
\end{proof}

Finally, we discuss the case that the bipartition of the vertex
set $V$ of the input graph $G=(V,E)$ into independent sets $X$ and $Y$ is not provided as
part of the input.
First, if $G=(V,E)$ is a connected bipartite graph, note that there is a unique
partition of $V$ into two independent sets $Q$ and $R$. We can then run
the recognition algorithm twice, once with $X=Q$ and $Y=R$ and once with
$X=R$ and $Y=Q$. $G$~is caterpillar-convex if and only if at least one of
the two runs of the algorithm produces a caterpillar representation.
If $G=(V,E)$ is not connected, let $H_1,\ldots,H_r$ for some $r>1$ be its connected
components. As just discussed, we can check in polynomial time whether
each connected component $H_j$, $1\le j\le r$, is a caterpillar-convex bipartite graph.
If all $r$ connected components are caterpillar-convex, the whole
graph $G$ is caterpillar-convex, and a caterpillar representation can
be obtained by concatenating the backbones of the caterpillar representations
of the connected components in arbitrary order. If at least one of the
connected components, say, the component~$H_j$, is not caterpillar-convex,
then $G$ is not caterpillar-convex either. This can be seen as follows:
Assume for a contradiction that $G$ is caterpillar-convex while
$H_j$ is not caterpillar-convex. Then let
$T=(X,F)$ be a caterpillar representation of~$G$. Observe that the
subgraph of $T$ induced by~$V(H_j)\cap X$, where $V(H_j)$ denotes
the vertex set of~$H_j$, must be connected. Therefore,
that subgraph of $T$ provides a caterpillar representation of~$H_j$,
a contradiction to our assumption.
This establishes the following corollary.

\begin{corollary}
\label{cor:recogcat}%
There is a polynomial-time algorithm that decides whether a given
bipartite graph $G=(V,E)$ is caterpillar-convex, i.e., whether it
admits a bipartition of $V$ into independent sets $X$ and $Y$ such that
there is a caterpillar representation $T=(X,F)$.
\end{corollary}

\section{Conclusion}
\label{sec:Conclusion}

Determining the computational complexity of {\sc Li $k$-col} for $k\geq 3$ when restricted to comb-convex bipartite graphs was stated as an open problem by Bonomo-Braberman et al.~\cite{flavia}. Subsequently,
the same authors proved that the problem is NP-complete
for $k\geq 4$~\cite{brettell_arxiv}, but the complexity
for $k=3$ was still left open.
In this paper, we resolve this question by showing that {\sc Li $3$-col} is solvable in polynomial time even for caterpillar-convex bipartite graphs, a superclass of comb-convex bipartite graphs. 

Recall that if mim-width is bounded for a graph class $\mathcal{G}$, then {\sc Li $k$-col} is polynomially solvable when it is restricted to $\mathcal{G}$. Polynomial-time solvability of {\sc Li $k$-col} on circular convex graphs is shown by demonstrating that mim-width is bounded for this graph class ~\cite{flavia}. On the other hand, there are graph classes for which {\sc Li 3-col} is tractable but mim-width is unbounded, such as star-convex bipartite graphs~\cite{brettell_arxiv}. By combining our result with Theorem~3 in~\cite{flavia}, we conclude that caterpillar-convex bipartite graphs and comb-convex bipartite graphs also belong to this type of graph classes.
On a much larger graph class, chordal bipartite graphs, the computational complexity of {\sc Li $3$-col} is still open~\cite{huang}.

Finally, as for future work, it would be interesting to see whether one can modify and extend Algorithm~\ref{algo2} to recognize comb-convex bipartite graphs.

\bibliography{caterpillar_arxiv_revised}

\end{document}